\documentclass[11pt]{amsart}

\usepackage{amsmath,amssymb, mathrsfs, dsfont, amsthm, color}
\usepackage{bbm}
\usepackage{stmaryrd}
\usepackage{graphicx}
\usepackage{paralist}
\setdefaultenum{(i)}{}{}{}
\usepackage[margin=1.25in]{geometry}
\usepackage{tikz}

\newtheorem{theorem}{Theorem}[section]
\newtheorem{assumption}[theorem]{Assumption}

\newtheorem{corollary}[theorem]{Corollary}

\newtheorem{definition}[theorem]{Definition}

\newtheorem{lemma}[theorem]{Lemma}
\newtheorem{proposition}[theorem]{Proposition}

\theoremstyle{definition}

\newtheorem{remark}[theorem]{Remark}

\numberwithin{equation}{section}

\renewcommand\hat[1]{\widehat{#1}}

\newcommand\RR{\mathbb R}

\newcommand\Z{\mathcal Z}

\newcommand\one{\mathds1}
\newcommand\interior{\operatorname{int}}

\newcommand\set[1]{\left\{#1\right\}}

\newcommand\sets[2]{\set{#1\,:\,#2}}

\newcommand\cadlag{c\`adl\`ag{} }

\newcommand\esssup{\operatorname{ess\;sup}}

\title{On the existence of shadow prices}
\thanks{The authors are grateful to Bruno Bouchard, Christoph Czichowsky, Paolo Guasoni, Ioannis Karatzas, Marcel Nutz, Mark P.\ Owen, and Walter Schachermayer for fruitful discussions, and also acknowledge the constructive comments of two anonymous referees. The second author thanks the ``Chaire Les Particuliers Face aux Risques'', Fondation du Risque (Groupama-ENSAE-Dauphine), the GIP-ANR ``Risk'' project for their support. The fourth author was partially supported by the National Centre of Competence in Research ÒFinancial Valuation and Risk ManagementÓ (NCCR FINRISK), Project D1 (Mathematical Methods in Financial Risk Management), of the Swiss National Science Foundation (SNF)}

\keywords{Transaction costs, shadow prices, short selling constraints}
\subjclass[2010]{91G10, 91B16. \textit{JEL Classification}: G11}

\author[G.\ Benedetti]{Giuseppe Benedetti}
\address{CREST and Universit\'e Paris-Dauphine \endgraf Place du Mar\'echal de Lattre de Tassigny, FR-75775 Paris, France}
\email{giuseppe.benedetti@ensae.fr}
\author[L.\ Campi]{Luciano Campi}
\address{Universit\'e Paris 13, Laboratoire Analyse, G\'eom\'etrie et Applications, and CREST\endgraf 99, Avenue Jean-Baptiste Cl\'ement, FR-93430 Villetaneuse, France}
\email{campi@math.univ-paris13.fr}
\author[J.\ Kallsen]{Jan Kallsen}
\address{Christian-Albrechts-Universit\"at zu Kiel, Mathematisches Seminar\endgraf Westring 383, D-24118 Kiel, Germany}
\email{kallsen@math.uni-kiel.de}
\author[J.\ Muhle-Karbe]{Johannes Muhle-Karbe}
\address{ETH Z\"urich, Departement f\"ur Mathematik\endgraf R\"amistrasse 101, CH-8092 Z\"urich, Switzerland}
\email{johannes.muhle-karbe@math.ethz.ch}
\date{June 11, 2012}

\begin{document}

\maketitle

\begin{abstract}
For utility maximization problems under proportional transaction costs, it has been observed that the original market with
transaction costs can sometimes be replaced by a frictionless {\em shadow market} that yields the same optimal strategy and utility.
However, the question of whether or not this indeed holds in generality has remained elusive so far.
In this paper we present a counterexample which shows that shadow prices may
fail to exist. On the other hand, we prove that short selling constraints are a sufficient condition  to warrant their existence,
even in very general multi-currency market models with possibly discontinuous bid-ask-spreads.
\end{abstract}

\section{Introduction}

Transaction costs have a severe impact on portfolio choice: If securities have to be bought for an ask price which is higher than the bid price one receives for selling them, then investors are forced to trade off the gains and costs of rebalancing.\footnote{For example, Liu and Loewenstein \cite{liu.loewenstein.02} reckon that ``even small transaction costs lead to dramatic changes in the optimal behavior for an investor: from continuous trading to virtually buy-and-hold strategies.''} Consequently, utility maximization under transaction costs has been intensely studied in the literature. We refer the reader to Campi and Owen~\cite{campi.owen.11} for general existence and duality results, as well as a survey of the related literature.

It has been observed that the original market with transaction costs can sometimes be replaced by a fictitious frictionless ``shadow market'', that yields the same optimal strategy and utility. If such a shadow price exists, then transaction costs do not lead to qualitatively new effects for portfolio choice, as their impact can be replicated by passing to a suitably modified frictionless market. Starting from \cite{kallsen.muhlekarbe.10}, shadow prices have recently also proved to be useful for solving optimization problems in concrete models, see \cite{kuehn.stroh.10,gerhold.al.10,herzegh.prokaj.11,gerhold.al.11}. However, unlike in the contexts of local risk minimization \cite{lamberton.al.98}, no-arbitrage \cite{guasoni.al.08,jouini.kallal.95,schachermayer.04}, and superhedging \cite{cvitanic.al.99} (also cf.\ \cite{kabanov.safarian.09} for an overview) --- the question of whether or not such a least favorable frictionless market extension indeed exists has only been resolved under rather restrictive assumptions so far.

More specifically, Cvitani\'c and Karatzas~\cite{cvitanic.karatzas.96} answer it in the affirmative in an It\^o process setting, however, only under the assumption that the minimizer in a suitable dual problem exists and is a martingale.\footnote{That is, a \emph{consistent price system} in the terminology of Schachermayer~\cite{schachermayer.04}.} Yet, subsequent work by Cvitani\'c and Wang \cite{cvitanic.wang.01} only guarantees existence of the minimizer in a class of supermartingales. Hence, this result is hard to apply unless one can solve the dual problem explicitly.

A different approach leading closer to an existence result is provided by Loewenstein~\cite{loewenstein.00}. Here, the existence of shadow prices is established for continuous bid-ask prices in the presence of short selling constraints. In contrast to Cvitani\'c and Karatzas~\cite{cvitanic.karatzas.96}, Loewenstein~\cite{loewenstein.00} constructs his shadow market directly from the primal rather than the dual optimizer. However, his analysis is also based on the assumption that the starting point for his analysis, in this case his constrained primal optimizer, actually does exist.

Finally, Kallsen and Muhle-Karbe~\cite{kallsen.muhlekarbe.11} show that shadow prices always exist for utility maximization problems in finite probability spaces. But, as usual in Mathematical Finance, it is a delicate question to what degree this transfers to more general settings.

The present study contributes to this line of research in two ways. On the one hand, we present a counterexample showing that shadow prices do not exist in general without further assumptions. On the other hand, we establish that Loewenstein's approach can be used to come up with a positive result, even in Kabanov's~\cite{kabanov.99} general multi-currency market models with possibly discontinuous bid-ask-spreads:

\begin{theorem}\label{mainresult}
In the general multicurrency setting of Section 3, a shadow price always exists, \emph{if} none of the assets can be sold short.
\end{theorem}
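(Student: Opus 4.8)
The plan is to adapt Loewenstein's method~\cite{loewenstein.00} of constructing the shadow price directly from the primal optimizer, whose existence is now guaranteed by the no-short-selling constraint. First I would show that the utility maximization problem under transaction costs, with the additional constraint that the holdings (in physical units) in every asset stay nonnegative, admits an optimizer $\hat\varphi$ with $\hat\varphi^i\ge 0$. The key mechanism is that, under the standing no-arbitrage assumption of Section~3, for any strictly consistent price system $Z$ the process $Z\cdot\varphi$ is a supermartingale for every admissible $\varphi$; since $\varphi_T\ge 0$ and $Z_T$ is componentwise strictly positive, this bounds every attainable terminal position above by the fixed, a.s.\ finite random vector with $i$-th component $(Z_0\cdot\varphi_0)/Z_T^i$, so the set of attainable terminal positions is order-bounded, hence bounded in $L^0$. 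It is also convex and closed by Campi and Owen~\cite{campi.owen.11}, so a standard compactness/Koml\'os argument yields $\hat\varphi$.

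Next I would extract from $\hat\varphi$, via first-order conditions obtained by perturbing with admissible variations and using concavity of $U$, a componentwise strictly positive process $\hat Z=(\hat Z^1,\dots,\hat Z^d)$ such that: (a) $\hat Z_t$ lies in the positive dual cone $K_t^\ast$ of the solvency cone for a.e.\ $(\omega,t)$ --- equivalently, the exchange rates $\tilde S^i_t:=\hat Z^i_t/\hat Z^1_t$ respect the bid-ask spreads; (b) each $\hat Z^i$ is a supermartingale, with $\hat Z^1_T$ proportional to $U'$ evaluated at the optimally liquidated terminal wealth; and (c) complementary slackness: whenever $\hat\varphi$ increases (resp.\ decreases) its holdings in asset $i$, the vector $\hat Z_t$ sits on the facet of $K_t^\ast$ on which buying (resp.\ selling) asset $i$ is priced at the prevailing ask (resp.\ bid) quote. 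It is the short-selling constraint that permits (b) with mere \emph{supermartingales} rather than martingales; without it, a shadow price would require the dual minimizer to be a genuine consistent price system, which --- as the counterexample shows --- may fail to exist.

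Finally I would check that the frictionless process $\tilde S$ from (a) is a shadow price. By (a) the frictionless $\tilde S$-market, still subject to the no-short-selling constraint, dominates the transaction-cost market, so its value is no smaller; by (c) the self-financing condition for $\hat\varphi$ in the $\tilde S$-market holds with equality and the wealth process of $\hat\varphi$, hence its terminal liquidation value, is identical in the two markets; and by (b)--(c), $\hat Z^1$ is a supermartingale deflator certifying, via the usual sufficient conditions for frictionless constrained utility maximization, that $\hat\varphi$ is optimal for the $\tilde S$-market. Therefore $\hat\varphi$ and the optimal utility agree in both markets, which is exactly Theorem~\ref{mainresult}.

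I expect the second step to be the main obstacle. Producing $\hat Z$ with enough integrability and, above all, verifying the complementary-slackness relation (c) in full generality is delicate: with bid-ask processes that are merely \cadlag and possibly discontinuous, the variational inequalities have to be analyzed separately at jump times, carefully distinguishing left and right limits of both $\varphi$ and the bid-ask process, and one has to ensure that the resulting supermartingale sits on \emph{exactly} the right facets of $K_t^\ast$. This bookkeeping is precisely what upgrades ``$\tilde S$-market at least as favorable as the transaction-cost market'' to ``$\tilde S$-market equivalent to it'', i.e.\ what makes $\tilde S$ a genuine shadow price rather than just some frictionless market dominating the original one.
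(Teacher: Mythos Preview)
Your overall plan---Loewenstein's primal approach, first securing the constrained optimizer $\hat\varphi$ and then extracting a supermartingale $\hat Z\in K^*$ from it---is exactly what the paper does. The existence step is also handled similarly, though the paper relies on the compactness result for predictable finite-variation processes from Campi--Schachermayer~\cite{campi.schachermayer.06} rather than on closedness in $L^0$ plus a Koml\'os argument.

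The substantive difference is in how the shadow-price property is verified. You plan to establish a pointwise complementary-slackness relation~(c) and then argue that $\hat\varphi$ is self-financing in the $\tilde S$-market with equality, so that the two terminal wealths coincide. You correctly flag this as the main obstacle in the \cadlag setting, where one must track left and right limits at every trading time. The paper sidesteps this obstacle entirely. It constructs $\hat Z^i_t$ as the right directional derivative of the \emph{conditional} value process $J(\hat V+e_i\epsilon,t)$ at $\epsilon=0$, proves the supermartingale property from the dynamic programming principle, and then verifies not complementary slackness but the single scalar condition
\[
\mathbb E\bigl[\hat Z^1_T\,\hat f\bigr]=\hat Z_0\cdot x,
\]
which follows from the identity $h\cdot x=\mathbb E[U'(\hat f)\hat f]$ for any supergradient $h\in\partial J(x)$ together with $h_i\ge\hat Z^i_0$. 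Combined with $\hat Z^1_T=U'(\hat f)$ and $\hat Z^i_T/\hat Z^1_T=1/\pi^{i1}_T$, this is enough: in the frictionless $\tilde S$-market the static problem $\max\{\mathbb E[U(W)]:\mathbb E[\hat Z^1_T W]\le\hat Z_0\cdot x\}$ dominates the dynamic one and is solved by $\hat f=I(\hat Z^1_T)$, so $\hat f$ is optimal there too. Complementary slackness is then a \emph{consequence} rather than an ingredient of the proof, and the delicate jump-time bookkeeping you anticipate is never needed.
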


The crucial assumption -- which is violated in our counterexample -- is the prohibition of short sales for all assets under consideration. Like Loewenstein~\cite{loewenstein.00}, we construct our shadow price from the primal optimizer. Existence of the latter is established by extending the argument of Guasoni~\cite{guasoni.02} to the general setting considered here. This is done by making use of a compactness result for predictable finite variation processes established in Campi and Schachermayer~\cite{campi.schachermayer.06}.

The paper is organized as follows. In Section \ref{counterexample}, we present our counterexample, which is formulated in  simple setting with one safe and one risky asset for better readability. Afterwards, the general multi-currency framework with transaction costs and short selling constraints is introduced. In this setting, we then show that shadow prices always exist. Finally, Section \ref{conclusion} concludes.

\section{A counterexample}  \label{counterexample}
In this section, we show that even a simple discrete-time market can fail to exhibit a shadow price,\footnote{Cf.\ Definition \ref{def:shadow} below for the formal definition} if the unconstrained optimal strategy involves short selling. Consider a market with one safe and one risky asset traded at the discrete times $t=0,1,2$: The bid and ask prices of the safe asset are equal to $1$ and the bid-ask spread\footnote{In the general multicurrency notation introduced below, this corresponds to $[1/\pi^{21},\pi^{12}]$, where $\pi^{ij}$ denotes the number of units of asset $i$ for which the agent can buy one unit of asset $j$.} $[\underline S,\overline S]$ of the risky asset is defined as follows. The bid prices are deterministically given by
$$\underline S_0=3,\quad  \underline S_1=2,\quad \underline S_2=1,$$
while the ask prices satisfy $\overline S_0=3$ and
\begin{eqnarray*}
 \mathbb P(\overline S_1=2)&=&1-2^{-n},\\
\mathbb P(\overline S_1=2+k)&=&2^{-n-k}, \quad k=1,2,3,\dots,
\end{eqnarray*}
where $n\in\mathbb N$ is chosen big enough for the subsequent argument to work.
Moreover, we set
\begin{eqnarray*}
 \mathbb P(\overline S_2=3+k|\overline S_1=2+k)&=&2^{-n-k},\\
\mathbb P(\overline S_2=1|\overline S_1=2+k)&=&1-2^{-n-k}
\end{eqnarray*}
for $k=0,1,2,\dots$ The corresponding bid-ask process is illustrated in Figure \ref{fig:1}. One readily verifies that a strictly consistent price system exists in this market. Now, consider the maximization of expected logarithmic utility from terminal wealth, where the maximization takes place over all self-financing portfolios that liquidate at $t=2$ after starting from an initial position of $-1$ risky and $4$ safe assets at time $0$.

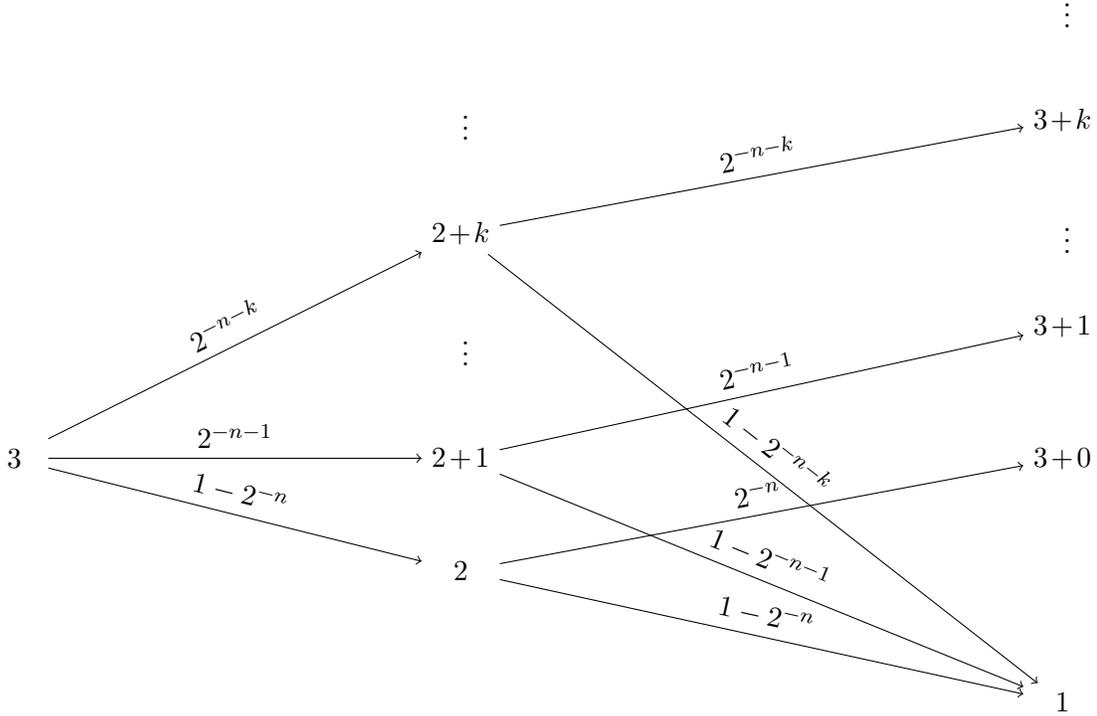
\begin{figure}[htbp]
 \tikzstyle{bag} = [text width=2em, text centered]
\tikzstyle{end} = []
\begin{center}
\begin{tikzpicture}[sloped]
 \node (a) at ( 0,0) [bag] {$\ 3$};
 \node (b1) at ( 6,-1.5) [bag] {$2$};
 \node (b2) at ( 6,0) [bag] {$2+1$};
 \node (b3) at ( 6,1.5) [bag] {\ \vdots};
 \node (b4) at ( 6,3) [bag] {$2+k$};
 \node (b5) at ( 6,4.5) [bag] {\ \vdots};
\node (c1) at ( 14,-3.25) [bag] {1};
\node (c2) at ( 14,0) [bag] {$3+0$};
\node (c3) at ( 14,1.75) [bag] {$3+1$};
\node (c4) at ( 14,3) [bag] {\ \vdots};
\node (c5) at ( 14,4.5) [bag] {$3+k$};
\node (c6) at ( 14,6) [bag] {\ \vdots};

\draw [->] (a) to node [above] {$1-2^{-n}$} (b1);
\draw [->] (a) to node [above] {$2^{-n-1}$} (b2);
\draw [->] (a) to node [above] {$2^{-n-k}$} (b4);
\draw [->] (b1) to node [above] {$1-2^{-n}$} (c1);
\draw [->] (b1) to node [above] {$2^{-n}$} (c2);
\draw [->] (b2) to node [above] {$1-2^{-n-1}$} (c1);
\draw [->] (b2) to node [above] {$2^{-n-1}$} (c3);
\draw [->] (b4) to node [above] {$1-2^{-n-k}$} (c1);
\draw [->] (b4) to node [above] {$2^{-n-k}$} (c5);
\end{tikzpicture}
\caption{Illustration of the ask price in the counterexample. The corresponding bid price decreases deterministically from 3 to 2 to 1.}\label{fig:1}
\end{center}
\end{figure}

It is not hard to determine the optimal trading strategy in this setup.
Buying a positive amount of stock at time 0 is suboptimal because
expected gains would be negative. Consequently, zero holdings are preferable to positive ones.
A negative position in the first period, on the other hand, is impossible as well
because it  may lead to negative terminal wealth.
Hence it is optimal to do nothing at time $0$,
i.e., the optimal numbers $\hat V=(\hat{V}^1,\hat{V}^2)$ of safe and risky assets satisfies $\hat V^2_0=0$. In the second period,  a positive stock holding would be again suboptimal
because prices are still falling on average. By contrast, building up a negative position
is worthwhile. The stock can be sold short at time $1$ for $\underline S_1=2$ and
it can be bought back at time $2$ for $\overline S_2=1$ with overwhelming probability and
for $\overline S_2=3$ resp.\ $3+k$  with very small probability.
Consequently, the optimal strategy satisfies $\hat V^2_1<0$ in any state.

If a shadow price process $(1,\widetilde S)$ for this market exists, then $\tilde S$ must coincide with the bid resp.\ ask price if a transaction takes place in the optimal strategy $\hat V$. Otherwise, one could achieve strictly higher utility trading $\tilde{S}$, by performing the same purchases and sales at sometimes strictly more favorable prices. Consequently, we must have
$\widetilde S_0=\overline S_0=3$, $\widetilde S_1=\underline S_1=2$,
$\widetilde S_2=\overline S_2$.
However, $(1,\widetilde S)$ cannot be a shadow price process. Indeed, $\widetilde S$ is decreasing deterministically by $1$ in the first period and would allow for unbounded expected utility and in fact even for arbitrage.

\begin{remark}\label{rem:counterconstraints}
It is important to note that the market discussed in this section \emph{does} admit a shadow price if one imposes short selling constraints. In this case, it is evidently optimal not to trade at all in the original market with transaction costs: Positive positions are not worthwhile because prices are always falling on average, whereas negative positions are ruled out by the constraints. In this market \emph{any} supermartingale $(1,\widetilde S)$ with values in the bid-ask spread, i.e., $\widetilde S_0=3$, $2 \leq \widetilde S_1\leq \overline S_1$ and $\widetilde S_2=1$, is a shadow price (showing that even if a shadow price exists, it need not be unique). Indeed, Jensen's inequality yields that positive positions are suboptimal, and negative holdings are again prohibited by the constraints. Hence it is optimal not to trade at all, as in the original market with transaction costs. This confirms that $(1,\widetilde S)$ is indeed a shadow price if short selling is ruled out. However, it is clearly not a shadow price in the unconstrained market, as it would allow for obvious arbitrage.
\end{remark}

\section{The General Multi-Currency Model}\label{sec:general}

Henceforth, we work in the general transaction cost framework of Campi and Schachermayer \cite{campi.schachermayer.06}, with slight modifications to incorporate short selling constraints. We describe here the main features of the model, but refer to the original paper for further details. For any vectors $x,y$ in $\mathbb R^d$, we write $x \succeq y$ if  $x-y \in \mathbb R^d_+$ and $xy$ for the Euclidean scalar product.

Let $(\Omega,(\mathcal F_t)_{t\in[0,T]},\mathbb P)$ be a filtered probability space satisfying the usual conditions and supporting all processes appearing in this paper; the initial $\sigma$-field is assumed to be trivial.

We consider an agent who can trade in $d$ assets according to some \emph{bid-ask matrix} $\Pi=(\pi^{ij})_{1\le i,j\le d}$, where $\pi^{ij}$ denotes the number of units of asset $i$ for which the agent can buy one unit of asset $j$. To recapture the notion of currency exchanges, one naturally imposes as in \cite{schachermayer.04} that:
\begin{enumerate}
\item[(i)] $\pi^{ij}\in (0,\infty)$ for every $1\le i,j\le d$;
\item[(ii)] $\pi^{ii}=1$ for every $1\le i\le d$;
\item[(iii)] $\pi^{ij}\le\pi^{ik}\pi^{kj}$ for every $1\le i,j,k\le d$.
\end{enumerate}
The first condition means that the bid-ask prices of each asset in terms of the others are positive and finite, while the interpretation of the second is evident. The third implies that direct exchanges are not dominated by several successive trades. In the spirit of \cite{kabanov.99}, the entries of the bid-ask matrix can also be interpreted in terms of the prices $S^1,\ldots,S^d$ of the assets and proportional transaction costs $\lambda^{ij}$  for exchanging asset $i$ into asset $j$, by setting
$$\pi^{ij}=(1+\lambda^{ij})\frac{S^j}{S^i}.$$
We will use both notations in the sequel, one being shorter and the other providing a better financial intuition. Given a bid-ask matrix $\Pi$, the \emph{solvency cone} $K(\Pi)$ is defined as the convex polyhedral cone in $\RR^d$ spanned by the canonical basis vectors $e_i$, $1\le i\le d$, of $\RR^d$, and the vectors $\pi^{ij}e_i-e_j$, $1\le i,j\le d$.\footnote{$K(\Pi)$ contains precisely the \emph{solvent} portfolios that can be liquidated to zero by trading according to the bid-ask matrix $\Pi$ and possibly throwing away positive asset holdings.} The convex cone $-K(\Pi)$ should be interpreted as those portfolios available at price zero in a market without short selling constraints. Given a cone $K$, its (positive) polar cone is defined by
\[ K^* = \sets{w\in\RR^d}{ vw\ge0,\forall v\in K}. \]

We now introduce randomness and time in the model. An adapted, \cadlag process $(\Pi_t)_{t\in[0,T]}$ taking values in the set of bid-ask matrices will be called a \emph{bid-ask process}. Once a bid-ask process $(\Pi_t)_{t\in[0,T]}$ has been fixed, one can drop it from the notation by writing $K_\tau$ instead of $K(\Pi_\tau)$ for any stopping time $\tau$, coherently with the framework introduced above.
In accordance with the framework developed in \cite{campi.schachermayer.06} we make the following technical assumption throughout the paper. It means basically that no price changes take place at time $T$, which serves only as a date for liquidating the portfolio. This assumption can be relaxed via a slight modification of the model (see \cite[Remark 4.2]{campi.schachermayer.06}). For this reason, we shall not explicitly mention it in the following.
\begin{assumption}\label{ass:contPi}
 $\mathcal F_{T-}=\mathcal F_T$ and $\Pi_{T-}=\Pi_T$ a.s.
\end{assumption}

In markets with transaction costs, consistent price systems play a role similar to martingale measures for the frictionless case (compare, e.g., \cite{schachermayer.04,guasoni.al.08,kabanov.safarian.09}). For utility maximization, this notion has to be extended, just as it is necessary to pass from martingale measures to supermartingale densities in the frictionless case \cite{kramkov.schachermayer.99}:

\begin{definition}
An adapted, $\RR_+^d\setminus\{0\}$-valued, \cadlag supermartingale $Z=(Z_t)_{t\in[0,T]}$ is called a \emph{supermartingale consistent price system} (supermartingale-CPS) if $Z_t\in K_t^*$ a.s.\ for every $t\in[0,T]$. Moreover, $Z$ is called a \emph{supermartingale strictly consistent price system} (supermartingale-SCPS) if it satisfies the following additional condition: for every $[0,T]\cup\set{\infty}$-valued stopping time $\tau$, we have $Z_\tau\in\interior(K_\tau^*)$ a.s.\ on $\set{\tau<\infty}$, and for every predictable $[0,T]\cup\set{\infty}$-valued stopping time $\sigma$, we have $Z_{\sigma-}\in\interior(K_{\sigma-}^*)$ a.s.\ on $\set{\sigma<\infty}$.
The set of all supermartingale-(S)CPS is denoted by $\Z_{\sup}$ (resp.\ $\Z^s_{\sup}$).
\end{definition}
As in \cite{campi.schachermayer.06}, trading strategies are described by the numbers of physical units of each asset held at time $t$:

\begin{definition}  \label{SF}
An $\RR^d$-valued process $V = (V_t)_{t\in[0,T]}$ is called a \emph{self-financing portfolio process} for the process $K$ of solvency cones if it satisfies the following properties:
 \begin{enumerate}
   \item It is predictable and a.e.\ (not necessarily right-continuous) path has finite variation.
   \item For every pair of stopping times $0\le\sigma\le\tau\le T$, we have
   \[ V_\tau-V_\sigma \in -K_{\sigma,\tau}, \]
 \end{enumerate}
 where $K_{s,t}(\omega)$ denotes the closure of $\mathrm{cone}\{K_r(\omega), s\leq r<t\}$.

 A self-financing portfolio process $V$ is called \emph{admissible} if it satisfies the \emph{no short selling constraint} $V\succeq 0$.
 \end{definition}

We need some more notation related to such processes. For any predictable process of finite variation $V$, we can define its continuous part $V^c$ and its left (resp.\ right) jump process $\Delta V_t := V_t - V_{t-}$ (resp.\ $\Delta_+ V_t := V_{t+} -V_t$), so that $V=V^c + \Delta V + \Delta_+ V$. The continuous part, $V^c$, is itself of finite variation, so we can define its Radon-Nykodim derivative $\dot{V}^c _t$ with respect to its total variation process $\textrm{Var}_t (V^c)$, for all $t\in [0,T]$ (see \cite[Section 2]{campi.schachermayer.06} for details).

We will work under the following no-arbitrage assumption, which is the analogue of the existence of a supermartingale density in frictionless markets.
\begin{assumption}  \label{ass_CPS}
$\mathcal Z^s_{\sup} \neq \emptyset$.
\end{assumption}

We now turn to the utility maximization problem. Here, we restrict our attention without loss of generality to admissible and self-financing portfolio processes that start out with some initial endowment $x \in \mathbb{R}^d_+ \setminus \{0\}$, and such that $V_T$ is nonzero only in the first component (that is, the agent liquidates his wealth to the first asset at the final date). The set of those processes is denoted by $\mathcal A^{x,ss}$, and the set
\[ \mathcal A_T^{x,ss}:=\sets{V^1_T}{V\in\mathcal A^{x,ss}} \]
consists of all terminal payoffs (in the first asset) attainable at time $T$ from initial endowment~$x$. Moreover, the set
\[ \mathcal A_{T-}^{x,ss}:=\sets{V_{T-}}{V\in\mathcal A^{x,ss}} \]
contains the pre-liquidation values of admissible strategies.

The utility maximization problem considered in this paper is the following:
\begin{equation}	\label{trans_primal}
J(x):= \sup_{f \in \mathcal{A}^{x,ss}_T}\mathbb E[U(f)].
\end{equation}
Here, $U: (0,\infty) \to \mathbb R$ is a utility function in the usual sense, i.e., a strictly concave, increasing, differentiable function satisfying \begin{enumerate}
\item the \emph{Inada conditions} $\lim_{x\downarrow 0} U'(x) = \infty$ and $\lim_{x\uparrow \infty} U'(x) = 0$, and
\item the condition of \emph{reasonable asymptotic elasticity} (RAE): $\limsup_{x\to \infty} \frac{xU'(x)}{U(x)} <1$.
\end{enumerate}
We write $U^*(y)=\sup_{x>0}[U(x)-xy]$, $y>0$ for the conjugate function of $U$, and $I := (U')^{-1}$ for the inverse function of its derivative. To rule out degeneracies, we assume throughout that the maximal expected utility is finite:

\begin{assumption} \label{assfinite}
$J(x)= \sup_{f \in \mathcal{A}^{x,ss}_T}\mathbb E[U(f)]<\infty$.
\end{assumption}

A unique maximizer for the utility maximization problem \eqref{trans_primal} indeed exists:\footnote{In the absence of constraints, similar existence results have been established for increasingly general models of the bid-ask spread by \cite{cvitanic.karatzas.96, deelstra.al.01,bouchard.02,guasoni.02,campi.owen.11}.}

\begin{proposition}\label{prop:existence}
Fix an initial endowment $x\in \mathbb{R}^d_+ \setminus \{0\}$. Under Assumptions \ref{ass_CPS} and \ref{assfinite}, the utility maximization problem \eqref{trans_primal} admits a unique solution $\hat f \in \mathcal{A}^{x,ss}_T$.
\end{proposition}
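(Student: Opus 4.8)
The plan is the classical ``maximizing sequence plus compactness'' argument, following Guasoni~\cite{guasoni.02} but relying on the compactness theorem for predictable finite-variation processes of Campi and Schachermayer~\cite{campi.schachermayer.06}. \emph{Uniqueness} is the easy half: $\mathcal A^{x,ss}$ is convex and $V\mapsto V^1_T$ is linear, so $\A{x,ss}$ is convex, and strict concavity of $U$ rules out two maximizers. For \emph{existence}, fix a maximizing sequence $V^n\in\mathcal A^{x,ss}$, put $f^n:=(V^n_T)^1$, so that $\mathbb E[U(f^n)]\to J(x)$, a finite number by Assumption~\ref{assfinite}.

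\emph{A priori bounds and compactness.} Pick $Z\in\Z^s_{\sup}$ (Assumption~\ref{ass_CPS}). As in~\cite{campi.schachermayer.06}, for every admissible self-financing $V$ the scalar process $(Z_tV_t)_{t\in[0,T]}$ is a nonnegative supermartingale (combining $V_\tau-V_\sigma\in-K_{\sigma,\tau}$ with $Z\in K^*$ and $V\succeq0$). Hence $\mathbb E[Z^1_Tf^n]=\mathbb E[Z_TV^n_T]\le Z_0x$ and $\mathbb E[Z_{T-}V^n_{T-}]\le Z_0x$ for all $n$. Since $V^n\succeq0$ and $Z_t\in\interior(K^*_t)\subseteq(0,\infty)^d$, one deduces that for each fixed $t$ the random vectors $V^n_t$ are dominated by one and the same a.s.\ finite random vector, and --- the quantitative heart of the matter, where the no-short-selling constraint is essential --- that the total variations $\mathrm{Var}_{[0,T]}(V^n)$ are bounded in probability. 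These are the hypotheses needed to invoke the Campi--Schachermayer compactness principle, which yields convex combinations $\tilde V^n\in\operatorname{conv}(V^n,V^{n+1},\dots)$ and a predictable finite-variation process $\hat V$, self-financing for $K$, with $\tilde V^n\to\hat V$ in a suitable pointwise sense (in particular $\tilde V^n_{T-}\to\hat V_{T-}$ and $\tilde V^n_T\to\hat V_T$ a.s.). As both $\succeq0$ and concentration in the first coordinate at time $T$ are preserved by convex combinations and a.s.\ limits, $\hat V\in\mathcal A^{x,ss}$; set $\hat f:=\hat V^1_T\in\A{x,ss}$, and note that $\tilde f^n:=(\tilde V^n_T)^1$ is a convex combination of $f^n,f^{n+1},\dots$ with $\tilde f^n\to\hat f$ a.s.

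\emph{Optimality of the limit.} By concavity, $\mathbb E[U(\tilde f^n)]$ dominates the corresponding convex combination of the numbers $\mathbb E[U(f^k)]$, so $\liminf_n\mathbb E[U(\tilde f^n)]\ge J(x)$. Conversely, the RAE condition provides $\gamma\in(0,1)$ and $x_0>0$ with $U(\lambda x)\le\lambda^\gamma U(x)$ for $\lambda\ge1$, $x\ge x_0$; together with the bound $\sup_n\mathbb E[Z^1_T\tilde f^n]<\infty$ this gives uniform integrability of the positive parts $\{U^+(\tilde f^n)\}$, and combined with Fatou's lemma for the nonnegative functions $U^-(\tilde f^n)$ and the a.s.\ convergence $\tilde f^n\to\hat f$ we obtain $\limsup_n\mathbb E[U(\tilde f^n)]\le\mathbb E[U(\hat f)]$. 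Hence $\mathbb E[U(\hat f)]\ge J(x)$, while $\hat f\in\A{x,ss}$ gives the reverse inequality, so $\hat f$ is a maximizer --- the unique one, by the first paragraph.

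I expect the compactness step to be the main obstacle: admissible strategies form a family of predictable finite-variation processes that is not locally compact, and when bid prices fall there is no cheap a priori bound on the variation of near-optimal strategies, so one must genuinely use admissibility together with the strictly consistent price system to extract the tightness required, and then verify that the Campi--Schachermayer limit is still self-financing and admissible. By comparison, the concavity/RAE/Fatou portion of the argument is routine.
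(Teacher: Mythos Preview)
Your proposal is correct and follows essentially the same route as the paper: a maximizing sequence, the Campi--Schachermayer compactness principle for predictable finite-variation processes to extract (after convex combinations) an a.s.\ limit in $\mathcal A^{x,ss}$, and then Guasoni's RAE/Fatou argument to pass the expected utility to the limit, with uniqueness from strict concavity. The one point on which the paper is slightly more explicit is the adaptation of \cite[Lemma~3.2 and Proposition~3.4]{campi.schachermayer.06} to the present setting, where only a \emph{supermartingale}-SCPS is assumed: it notes that the martingale property of $Z$ is used in \cite{campi.schachermayer.06} only to ensure $\inf_{t}\|Z_t\|>0$ a.s., which still holds for strictly positive c\`adl\`ag supermartingales, and that the no-short-selling constraint supplies the lower bound replacing the admissibility condition there, yielding $\mathbb E_Q[\mathrm{Var}_T(V)]\le C\|x\|$ for some equivalent $Q$ --- precisely the variation control you flag as the main obstacle.
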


\begin{proof}
\emph{Step 1}: The compactness result for predictable finite variation processes established in  \cite[Proposition 3.4]{campi.schachermayer.06} also holds in our setting where, in particular, the existence of a strictly consistent price system is replaced by the weaker Assumption \ref{ass_CPS}. To see this, it suffices to show that the estimate in \cite[Lemma 3.2]{campi.schachermayer.06} is satisfied under Assumption \ref{ass_CPS}; then, the proof of \cite[Proposition 3.4]{campi.schachermayer.06} can be carried through unchanged. To this end, we can use the arguments in \cite[Section 3]{campi.schachermayer.06} with the following minor changes:
\begin{enumerate}
\item First, notice that in the proof of \cite[Lemma 3.2]{campi.schachermayer.06} the martingale property of $Z$ is only used to infer that this strictly positive process satisfies $\inf_{t\in [0,T]} \|Z_t\|_d>0$ a.s., and this remains true for supermartingales by \cite[VI, Theorem 17]{dellacherie.82}.
\item \cite[Lemma 3.2]{campi.schachermayer.06} is formulated for strategies starting from a zero initial position, but can evidently be generalized to strategies starting from any initial value $x$. Moreover, in our case the admissible strategies are all bounded from below by zero due to the short selling constraints. Consequently, \cite[Lemma 3.2]{campi.schachermayer.06} holds under the weaker Assumption \ref{ass_CPS} for $V\in\mathcal A^{x,ss}$ and, in our case, \cite[Equation (3.5)]{campi.schachermayer.06} reads as $\mathbb{E}_Q[\textrm{Var}_T(V)]\leq C\|x\|$ for a constant $C \geq 0$ and an equivalent probability $Q$.
\end{enumerate}
\emph{Step 2}: Pick a maximizing sequence $(V^n)_{n\geq 1}\in \mathcal A^{x,ss}$ for \eqref{trans_primal} such that $\mathbb E[U(V_T ^n )] \to J(x)$ as $n\to \infty$. By Step 1, we can assume (up to a sequence of convex combinations) that $V^n _T \to V^0 _T$ a.s.\ for some $V^0 \in \mathcal A^{x,ss}$. The rest of the proof now proceeds as in \cite[Theorem 5.2]{guasoni.02}: by means of RAE assumption, we can prove that $\lim_{n\to \infty} \mathbb E[U(V_T ^n )]  \leq \mathbb E[U(V_T ^0 )]$, implying that $V^0$ is the optimal solution to \eqref{trans_primal}.  Uniqueness follows from the strict concavity of $U$.
\end{proof}

Consider now a supermartingale-CPS $Z\in\mathcal Z_{\sup}$. By definition, $Z$ lies in the polar $K^*$ of the solvency cone (we omit dependence on time for clarity); since $Z \neq 0$ this implies in particular that all components of $Z$ are strictly positive. Moreover, taking any asset, say the first one, as a numeraire, it means that
\begin{equation}	\label{spread}
\frac{1}{1+\lambda^{i1}}\frac{S^i}{S^1}\leq\frac{Z^i}{Z^1}\leq(1+\lambda^{1i})\frac{S^i}{S^1},
\end{equation}
for any $i=1,\ldots,d$. In other words, the frictionless price process
$$S^Z:=Z/Z^1$$
evolves within the corresponding bid-ask spread. This implies that the terms of trade in this frictionless economy are at least as favorable for the investor as in the original market with transaction costs. For  $S^Z$, we use the standard notion of a self-financing strategy:\footnote{In particular, we do \emph{not} restrict ourselves to finite variation strategies here.}

\begin{definition}  \label{SF_shadow}
Let $S^Z:=Z/Z^1$ for some $Z\in\mathcal Z_{\sup}$. Then, a predictable, $\RR^d$-valued, $S^Z$-integrable process $V = (V_t)_{t\in[0,T]}$ is called a \emph{self-financing portfolio process} in the frictionless market with price process $S^Z$, if it satisfies
\begin{equation}\label{eq:sf_shadow}
V_tS^Z_t= xS^Z_0+\int_0^t V_u dS^Z_u, \quad t\in[0,T].
\end{equation}
It is called \emph{admissible} if it additionally satisfies the no short selling constraint $V\succeq 0$. We sometimes write $Z$-admissible to stress the dependence on a specific $Z\in\mathcal Z_{\sup}$.

The set $\mathcal{A}^{x,ss}_T(Z)$ consists of all payoffs $V_T S^Z_T$ that are attained by a $Z$-admissible strategy $V$ starting from initial endowment~$x \in \mathbb{R}^d_+ \setminus \{0\}$.
\end{definition}

This notion is indeed compatible with Definition \ref{SF}, in the sense that any payoff in the original market with transaction costs can be dominated in the potentially more favorable frictionless markets evolving within the bid-ask spread:

\begin{lemma}\label{lem:dom}
Fix $Z\in\mathcal Z_{\sup}$. For any admissible strategy $V$ in the sense of Definition \ref{SF} there is a strategy $\tilde V\succeq V$ which is $Z$-admissible in the sense of Definition \ref{SF_shadow}.
\end{lemma}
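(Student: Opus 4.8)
The plan is to start from an admissible self-financing portfolio process $V$ in the sense of Definition \ref{SF}, and to produce a predictable, $S^Z$-integrable process $\tilde V$ with $\tilde V \succeq V$ satisfying the frictionless self-financing equation \eqref{eq:sf_shadow} (with the same initial endowment $x$), where $S^Z = Z/Z^1$. The natural candidate is to \emph{not} change the positions at all where possible, and to absorb the ``transaction-cost savings'' into the first (numeraire) component; that is, one keeps $\tilde V^i = V^i$ for $i \ge 2$ and only augments the holdings in asset $1$. Concretely, I would set
\[
\tilde V^1_t := V^1_t + \frac{1}{S^{Z,1}_t}\Bigl(xS^Z_0 + \int_0^t V_u \,dS^Z_u - V_t S^Z_t\Bigr),
\qquad \tilde V^i_t := V^i_t \ \ (i=2,\dots,d),
\]
noting $S^{Z,1} \equiv 1$. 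By construction $\tilde V$ then satisfies \eqref{eq:sf_shadow} identically, and it is predictable and $S^Z$-integrable provided the stochastic integral makes sense and $V$ does not jump ``in the wrong direction'' relative to $S^Z$; predictability is inherited since $V$, $Z$, and the integral process are all predictable (the integral of a predictable integrand against a semimartingale can be taken predictable, or one works with $V_{t-}$ in the integrand and adds the jump terms back, which I will make precise).

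The substance of the lemma is the inequality $\tilde V \succeq V$, i.e.\ the self-financing surplus process
\[
G_t := xS^Z_0 + \int_0^t V_u \,dS^Z_u - V_t S^Z_t
\]
is nonnegative for all $t$. The key steps are: (1) observe that, since $V$ starts from endowment $x$ and is self-financing for the solvency-cone process $K$ in the sense of Definition \ref{SF}(ii), we have $V_0 \preceq x$ (modulo the initial trade $x - V_0 \in K_0$) and each increment $V_\tau - V_\sigma \in -K_{\sigma,\tau}$; (2) use that $Z_t \in K_t^*$, so that $Z$ ``prices'' every element of $-K_{\sigma,\tau}$ nonpositively — pairing an admissible increment with $Z$ gives a supermartingale-type inequality; (3) combine this with the supermartingale property of $Z$ to show that $V_t S^Z_t - xS^Z_0$ is dominated, increment by increment, by the ``frictionless'' gains $\int_0^t V_u \, dS^Z_u$. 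In discrete time this is transparent: $V_{t_{k+1}}S^Z_{t_k} - V_{t_k}S^Z_{t_k} = (V_{t_{k+1}} - V_{t_k}) S^Z_{t_k} \le 0$ because $V_{t_{k+1}} - V_{t_k}$ lies (up to a cone generated by the intermediate solvency cones) in $-K_{t_k}$ and $Z_{t_k}/Z^1_{t_k} \in K_{t_k}^*$, so telescoping gives $V_T S^Z_T - x S^Z_0 \le \sum (V_{t_{k+1}} - V_{t_k})(S^Z_{t_{k+1}} - S^Z_{t_k}) + \text{(terms handled by } Z\in K^*)$, rearranging to $G_T \ge 0$. In continuous time one has to pass to the integral form carefully, handling the continuous part $V^c$ via its Radon–Nikodym derivative $\dot V^c$ against $\mathrm{Var}(V^c)$ — which at $\mathrm{Var}(V^c)$-a.e.\ time lies in $-K_t$ — and the jumps $\Delta V_t \in -K_{t-}$, $\Delta_+ V_t \in -K_t$ separately, using that $Z$ (resp.\ $Z_{t-}$) lies in the corresponding polar cones; this is exactly the integration-by-parts / ``cone-valued Stieltjes integral'' computation from \cite[Section 2--3]{campi.schachermayer.06}.

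I expect the main obstacle to be the rigorous continuous-time bookkeeping in step (3): writing $d(V_t S^Z_t) = V_{t-}\,dS^Z_t + S^Z_{t-}\,dV_t + d[V,S^Z]_t$ (integration by parts for processes of finite variation paired with a semimartingale) and then showing term by term that the ``$dV$-part'' contributes nonpositively because of the cone membership of the infinitesimal increments of $V$ and the polarity $Z \in K^*$, while being careful about left vs.\ right jumps and about the fact that $S^Z$ is only a semimartingale, not of finite variation. The bracket term $[V, S^Z]$ is a pure-jump finite-variation process (since $V$ is of finite variation), and its jumps $\Delta V_t \,\Delta S^Z_t$ must be reorganized so that the comparison with $-K_{t-}$-membership is legitimate; similarly $\Delta_+ V_t$ contributes a term that must be matched with $Z \in K_t^*$. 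Once these sign computations are in place, nonnegativity of $G$ — hence $\tilde V \succeq V$ — follows, and admissibility of $\tilde V$ (i.e.\ $\tilde V \succeq 0$) is immediate from $\tilde V \succeq V \succeq 0$. I would also record at the end that $\tilde V^1_T \ge V^1_T$, so in particular the attainable payoff sets satisfy $\mathcal A^{x,ss}_T \subseteq \{\, \tilde V^1_T : \tilde V \text{ is } Z\text{-admissible}\,\}$ in the sense needed later.
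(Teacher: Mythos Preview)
Your proposal is correct and follows essentially the same route as the paper: define $\tilde V^i=V^i$ for $i\ge 2$ and absorb the surplus $G_t=xS^Z_0+\int_0^t V_u\,dS^Z_u - V_tS^Z_t$ into the first component, then show $G_t\ge 0$ via integration by parts and the fact that $S^Z\in K^*$ makes the $dV$-contributions (continuous part, left jumps, right jumps) nonpositive, citing \cite[Lemma~2.8]{campi.schachermayer.06}. The paper's proof is exactly this, only streamlined by invoking that lemma directly rather than expanding the bracket term; your comment about predictability of $\tilde V^1$ is also handled in the paper by rewriting $\tilde V^1_t = xS^Z_0+\int_0^{t-}V_u\,dS^Z_u - \sum_{i\ge 2} V^i_t S^{Z,i}_{t-}$.
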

\begin{proof}
Let $V$ be self-financing in the original market with transaction costs in the sense of Definition \ref{SF}. Then, since $V$ is of finite variation, applying the integration by parts formula as in the proof of \cite[Lemma 2.8]{campi.schachermayer.06} yields
\begin{equation}\label{eq:upperbound}
  \begin{split}
    V_t S^Z_t&=xS^Z_0+\int_0^t V_u dS^Z_u+\int_0^t S^Z_u \dot{V}^c_u d\textrm{Var}_u(V^c)+\displaystyle\sum_{u\leq t} S^Z_{u-}\Delta V_u+\displaystyle\sum_{u< t} S^Z_{u}\Delta_+ V_u\\
    &\leq xS^Z_0+\int_0^t V_u dS^Z_u
  \end{split}
\end{equation}
because, since $Z\in K^*$ and $Z^1>0$ imply $S^Z\in K^*$, we can use \cite[Lemma 2.8]{campi.schachermayer.06} to get
\[ \int_0^t S^Z_u \dot{V}^c_u d\textrm{Var}_u(V^c)+\displaystyle\sum_{u\leq t} S^Z_{u-}\Delta V_u+\displaystyle\sum_{u< t} S^Z_{u}\Delta_+ V_u \leq 0  .\]
Now, define the portfolio process $\tilde V$ by
$$\tilde V^1_t:=xS^Z_0+\int_0^t V_u dS^Z_u-\displaystyle\sum_{i=2}^d V^i_t S^{Z,i}_t=xS^Z_0+\int_0^{t-} V_u dS^Z_u-\displaystyle\sum_{i=2}^d V^i_t S^{Z,i}_{t-},$$
and $\tilde V^i_t:=V^i_t$ for $i=2,\ldots,d$. Then, $\tilde V$ is self-financing in the sense of Definition \ref{SF_shadow} by construction. Moreover, again by definition and due to \eqref{eq:upperbound}, we have $$\tilde V_t S^Z_t = xS^Z_0+\int_0^t V_u dS^Z_u \geq  V_t S^Z_t$$
and in turn $\tilde V\succeq V$.
\end{proof}

In view of Lemma \ref{lem:dom}, the maximal expected utility in the frictionless market $S^Z$ associated to \emph{any} supermartingale-CPS $Z\in \mathcal{Z}_{\sup}$ is greater than or equal to its counterpart in the original market with transaction costs:
$$\sup_{f \in \mathcal{A}^{x,ss}_T}\mathbb E[U(f)]\leq\sup_{f \in \mathcal{A}^{x,ss}_T(Z)}\mathbb E[U(f)].$$
The natural question that arises here, and which we address in the sequel, is whether we can find some particularly unfavorable $Z\in\mathcal Z_{\sup}$ such that this inequality becomes an equality.

\begin{definition}	\label{def:shadow}
Fix an initial endowment $x\in \mathbb{R}^d_+ \setminus \{0\}$. The process $S^Z=Z/Z^1$ corresponding to some $Z\in\mathcal Z_{\sup}$ is called a \emph{shadow price process}, if
$$\sup_{f \in \mathcal{A}^{x,ss}_T}\mathbb E[U(f)]=\sup_{f \in \mathcal{A}^{x,ss}_T(Z)}\mathbb E[U(f)].$$
\end{definition}

Some remarks are in order here.

\begin{enumerate}
\item Even if a shadow price exists it need not be unique, cf.\ Remark \ref{rem:counterconstraints}.
\item By Lemma \ref{lem:dom}, any payoff that can be attained in the original market with transaction costs can be dominated in frictionless markets with prices evolving within the bid-ask spread. Hence, strict concavity implies that the optimal payoff $\hat f$ must be the same for a shadow price as in the transaction cost market. In order not to yield a strictly higher utility in the shadow market, the optimal strategy $\hat V$ that attains $\hat f$ with transaction costs must therefore also do so in the shadow market. Put differently, a shadow price must match the bid resp.\ ask prices whenever the optimal strategy $\hat V$ entails purchases resp.\ sales.
\end{enumerate}

\section{Existence of Shadow prices under short selling constraints}

In this section, we prove that a shadow price always exists if short selling is prohibited (cf.\ Corollary \ref{cor:shadow}). To this end, we first derive some sufficient conditions, and then verify that these indeed hold. Throughout, we assume that Assumptions \ref{ass_CPS} and \ref{assfinite} are satisfied.

The following result crucially hinges on the presence of short selling constraints.

\begin{lemma}	\label{supermart} For any supermartingale-CPS $Z\in\mathcal Z_{\sup}$ the following holds:
\begin{enumerate}
\item The process $ZV$ is a supermartingale for any portfolio process $V$ admissible in the sense of Definition \ref{SF}.
\item The process $ZV=Z^1VS^Z$ is a supermartingale for any portfolio process $V$ which is $Z$-admissible in the sense of Definition \ref{SF_shadow}.
\end{enumerate}
\end{lemma}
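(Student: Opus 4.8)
The plan is to prove (i) and then derive (ii) from it. For (i), let $V$ be admissible in the sense of Definition~\ref{SF}, so $V$ is predictable, of finite variation, and satisfies $V\succeq 0$ together with the self-financing condition $V_\tau-V_\sigma\in-K_{\sigma,\tau}$ for all stopping times $\sigma\le\tau$. Fix $Z\in\mathcal Z_{\sup}$. The first point is that $ZV$ is well defined and integrable-bounded from below: since $V\succeq 0$ and $Z$ has strictly positive components, $Z_tV_t=\sum_i Z^i_tV^i_t\ge 0$, so $ZV$ is a nonnegative process. To show it is a supermartingale, the strategy is to combine the integration-by-parts / stochastic calculus computation from \cite[Lemma 2.8]{campi.schachermayer.06} (exactly as used in the proof of Lemma~\ref{lem:dom}) with the supermartingale property of $Z$. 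Concretely, writing the product rule for the finite-variation process $V$ against the \cadlag supermartingale $Z$, one gets
\[
Z_tV_t=Z_0V_0+\int_0^t V_{u-}\,dZ_u+\int_0^t Z_u\,\dot V^c_u\,d\mathrm{Var}_u(V^c)+\sum_{u\le t}Z_{u-}\Delta V_u+\sum_{u<t}Z_u\Delta_+V_u.
\]
Because $Z_t\in K_t^*$ (and $Z_{t-}\in K_{t-}^*$, using the \cadlag property and the polarity of the closed cones $K_{s,t}$), and because each increment of $V$ lies in $-K$ of the appropriate cone, Lemma~2.8 of \cite{campi.schachermayer.06} gives that the last three terms sum to a nonincreasing process. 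The term $\int_0^t V_{u-}\,dZ_u$ is a local supermartingale, being the integral of the nonnegative (hence predictable, locally bounded after localization) integrand $V_{u-}$ against the supermartingale $Z$; here one should be slightly careful and argue that $Z=M-A$ is the Doob--Meyer decomposition, that $\int V_{u-}\,dM_u$ is a local martingale and $\int V_{u-}\,dA_u$ is nondecreasing, so $\int V_{u-}\,dZ_u$ is a local supermartingale. Thus $ZV$ is a nonnegative local supermartingale, and a nonnegative local supermartingale is a genuine supermartingale (by Fatou's lemma applied along a localizing sequence, using integrability at time $0$). This proves (i).

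For (ii), let $V$ be $Z$-admissible in the sense of Definition~\ref{SF_shadow}, so $V\succeq 0$, $V$ is $S^Z$-integrable, and $V_tS^Z_t=xS^Z_0+\int_0^t V_u\,dS^Z_u$. Note $Z_tV_t=Z^1_t\,(V_tS^Z_t)$ by definition of $S^Z=Z/Z^1$. The cleanest route is again to show $ZV$ is a nonnegative local supermartingale. Apply the product rule to $Z_tV_t=Z^1_t(V_tS^Z_t)$, or equivalently expand $d(Z_tV_t)=\sum_i d(Z^i_tV^i_t)$ directly: since $V$ is a general predictable $S^Z$-integrable (not finite-variation) strategy, one uses the self-financing identity to rewrite the drift, obtaining that $ZV$ differs from a stochastic integral of $V\succeq0$ against the supermartingale $Z$ by a term that vanishes because of the self-financing constraint. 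As in (i), the stochastic integral part is a local supermartingale and $ZV\ge 0$, so $ZV$ is a supermartingale. Alternatively one can invoke Lemma~\ref{lem:dom} in reverse spirit: but the direct computation is cleaner here.

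The main obstacle is the careful handling of integrability and the passage from local supermartingale to true supermartingale, together with making the stochastic-integration-by-parts identity rigorous for processes that are only predictable of finite variation (not \cadlag) in case (i) and only $S^Z$-integrable in case (ii). The short selling constraint $V\succeq0$ is exactly what is used twice: it makes $ZV$ nonnegative (so that the local-to-true supermartingale upgrade works and no admissibility-via-lower-bound subtleties arise), and it makes the integrand $V_{u-}$ in $\int V_{u-}\,dZ_u$ of the right sign to control the drift of $Z$. Without it, $ZV$ need only be a local supermartingale bounded below by a (possibly non-integrable) random variable, and the conclusion can fail — which is consistent with the counterexample in Section~\ref{counterexample}.
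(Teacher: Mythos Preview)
Your argument for (i) is essentially the paper's: integration by parts as in \cite[Lemma~2.8]{campi.schachermayer.06}, the $dV$-type terms are nonincreasing because $Z\in K^*$, the $\int V\,dZ$ term is a local supermartingale since $V\succeq 0$, and a nonnegative local supermartingale is a true supermartingale. That part is fine.

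For (ii) you are pointing in the right direction but you leave the key step unwritten. You say one should ``expand $d(Z_tV_t)$ directly'' and that self-financing makes the extra term vanish, but for a general $S^Z$-integrable $V$ (not of finite variation, not even a semimartingale) this product-rule bookkeeping needs care. The paper does not attempt it: it invokes the numeraire-invariance result \cite[Proposition~2.1]{goll.kallsen.00}, which states that the frictionless self-financing condition $d(V_tS^Z_t)=V_t\,dS^Z_t$ is equivalent to the undiscounted form $d(Z_tV_t)=V_t\,dZ_t$. That single identity immediately exhibits $ZV$ as the stochastic integral of the nonnegative integrand $V$ against the supermartingale $Z$, hence a nonnegative local supermartingale and therefore a supermartingale. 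Your sketch would reprove this numeraire-change lemma by hand; that can be done, but as written it is only a gesture, whereas the citation makes the step rigorous in one line.
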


\begin{proof}
(i) Integration by parts gives
\begin{equation} 	\label{int_parts}
Z_t V_t-Z^1_0 x=\int_0^t V_u dZ_u+\int_0^t Z_u \dot{V}^c_u d\textrm{Var}_u(V^c)+\displaystyle\sum_{u\leq t} Z_{u-}\Delta V_u+\displaystyle\sum_{u< t} Z_{u}\Delta_+ V_u.
\end{equation}
The first integral is a local supermartingale as $V$ is positive, while the other terms are decreasing processes by \cite[Lemma 2.8]{campi.schachermayer.06}. This implies that $ZV$ is a positive local supermartingale and thus a true supermartingale.

(ii) Let $V$ be any $Z$-admissible portfolio process. By \cite[Proposition 2.1]{goll.kallsen.00}, the frictionless self-financing condition \eqref{eq:sf_shadow} is equivalent to the same condition in undiscounted terms, i.e, $d(Z_tV_t)=V_t dZ_t$. Since $ZV$ is positive, it is therefore a positive local supermartingale and hence a supermartingale.
\end{proof}

 The next result presents sufficient conditions for the existence of a shadow price. These are essentially the usual optimality conditions of the duality theory for frictionless markets (cf., e.g., \cite{kramkov.schachermayer.99} and the references therein).

\begin{proposition}\label{suff_cond}
Let $x\in \mathbb{R}^d_+ \setminus \{0\}$. Suppose there are a supermartingale-CPS $Z$ and an a.s.\ strictly positive $\mathcal{F}_T$-measurable random variable $\hat f \in L^0$ satisfying:
\begin{enumerate}
\item $\hat f\in\mathcal A_T^{x,ss}$;
\item $Z^1_T=U'(\hat f)$;
\item $S^{Z,i}_T=Z_T^i/Z_T^1 = 1/\pi_T ^{i1}$ for $i=1,\ldots ,d$;
\item $ \mathbb E[Z^1_T\hat f]=Z_0 x$.
\end{enumerate}
Then, $\hat f$ is the optimal payoff both for the frictionless price process $S^Z$ and in the original market with transaction costs. Consequently, $S^Z$ is a shadow price.
\end{proposition}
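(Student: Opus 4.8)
The plan is to run the classical verification argument of frictionless convex duality, in which the supermartingale property of Lemma~\ref{supermart}(ii) serves as the budget constraint and the Fenchel inequality between $U$ and $U^*$ provides the matching upper bound; conditions~(i)--(iv) are, respectively, primal feasibility, the first-order relation linking the dual variable to the optimal payoff, the terminal matching of the shadow price with the ask prices, and the saturation of the budget constraint at~$\hat f$.

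I would start from the pointwise Fenchel inequality $U(\xi)\le U^*(y)+\xi y$, valid for all $\xi,y>0$ with equality precisely when $\xi=I(y)$. Condition~(ii) is exactly $\hat f=I(Z^1_T)$, so $U(\hat f)=U^*(Z^1_T)+Z^1_T\hat f$ holds $\mathbb P$-a.s.; taking expectations and inserting condition~(iv) yields $\E{U(\hat f)}=\E{U^*(Z^1_T)}+Z_0 x$. Together with condition~(i) and Assumption~\ref{assfinite} this gives $\E{U^*(Z^1_T)}=\E{U(\hat f)}-Z_0 x<\infty$, and the crude bound $U^*(y)\ge U(1)-y$ combined with $\E{Z^1_T}\le Z^1_0<\infty$ (a component of a supermartingale, with trivial $\mathcal F_0$) shows this expectation is also $>-\infty$, so $U^*(Z^1_T)$ is integrable.

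Next I would establish the shadow-market budget inequality: by Lemma~\ref{supermart}(ii), for every $Z$-admissible $V$ in the sense of Definition~\ref{SF_shadow} the process $ZV=Z^1 V S^Z$ is a supermartingale, so $\E{Z^1_T V_T S^Z_T}=\E{Z_T V_T}\le Z_0 V_0=Z_0 x$, the last equality coming from \eqref{eq:sf_shadow} at $t=0$. Applying the Fenchel inequality with $\xi=V_T S^Z_T$ and $y=Z^1_T$ and integrating, every payoff $g\in\A{x,ss}(Z)$ obeys
\[ \E{U(g)} \le \E{U^*(Z^1_T)}+\E{Z^1_T g} \le \E{U^*(Z^1_T)}+Z_0 x = \E{U(\hat f)}, \]
and, the majorant being integrable, $\E{U(g)}$ is well defined (never $\infty-\infty$). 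Hence $\sup_{g\in\A{x,ss}(Z)}\E{U(g)}\le\E{U(\hat f)}$. Conversely, condition~(i) gives $\hat f\in\A{x,ss}$, so $\E{U(\hat f)}\le\sup_{f\in\A{x,ss}}\E{U(f)}$, and by Lemma~\ref{lem:dom} (as recorded in the remarks after Definition~\ref{def:shadow}) the latter is $\le\sup_{g\in\A{x,ss}(Z)}\E{U(g)}$. The chain of inequalities closes, forcing equality throughout: $\hat f$ attains $\sup_{f\in\A{x,ss}}\E{U(f)}=\sup_{g\in\A{x,ss}(Z)}\E{U(g)}$, so $S^Z$ is a shadow price. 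To see that $\hat f$ is also the optimal \emph{payoff} in the frictionless market, take the transaction-cost optimizer $\hat V$ attaining $\hat f$; by condition~(iii) its liquidation performed at the terminal shadow prices produces the same payoff $\hat f$, and Lemma~\ref{lem:dom} supplies a $Z$-admissible $\tilde V\succeq\hat V$ with $\tilde V_T S^Z_T\ge\hat f$ and $\tilde V_T S^Z_T\in\A{x,ss}(Z)$; this payoff cannot exceed $\hat f$ in expected utility, so $\tilde V_T S^Z_T=\hat f$ a.s.\ by strict monotonicity of~$U$, and strict concavity yields uniqueness of the maximizer in each market.

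The main obstacle is not the convex-analytic algebra but the integrability bookkeeping: one must ensure that $U^*(Z^1_T)$ is integrable, that $\E{U(V_T S^Z_T)}$ never degenerates to $\infty-\infty$, and that condition~(iv) genuinely upgrades the supermartingale inequality to an equality at~$\hat f$; this is exactly where Assumption~\ref{assfinite}, the finiteness of $\E{Z^1_T}$, the Inada and RAE properties of $U$, and conditions~(i), (ii), (iv) are used.
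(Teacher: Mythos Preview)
Your argument is correct and follows the same convex-duality template as the paper's proof (Fenchel inequality together with the supermartingale budget constraint), but the two proofs are organized differently. The paper establishes optimality in the transaction-cost market \emph{directly}: for an arbitrary $X\in\mathcal A^{x,ss}_{T-}$ with liquidation value $f$, condition~(iii) is used to rewrite $Z^1_T f = Z_T X$, Lemma~\ref{supermart}(i) (applied to the transaction-cost portfolio) together with Fatou gives $\E{Z^1_T f}\le Z_0 x=\E{Z^1_T\hat f}$, and then concavity with~(ii) yields $\E{U(\hat f)}\ge\E{U(f)}$. You instead route the transaction-cost side through the frictionless market via the sandwich $\E{U(\hat f)}\le\sup_{\A{x,ss}}\le\sup_{\A{x,ss}(Z)}\le\E{U(\hat f)}$, using Lemma~\ref{lem:dom} for the middle inequality and Lemma~\ref{supermart}(ii) only. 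This makes condition~(iii) essentially superfluous for the optimality statements (it survives in your write-up only in the remark about liquidation, where in fact $\hat V_T=(\hat f,0,\dots,0)$ already gives $\hat V_T S^Z_T=\hat f$ without invoking~(iii)). On the other hand, you are more explicit than the paper about the integrability of $U^*(Z^1_T)$, which the paper tacitly assumes when writing \eqref{polar_ineq}. Both approaches are valid; yours is marginally more economical in the hypotheses it actually exercises.
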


\begin{proof} We first prove that $\hat f$ is the optimal solution to the utility maximization problem \eqref{trans_primal} under transaction costs and short selling constraints. By (i), the payoff $\hat f$ is attained by some $\hat V\in\mathcal A^{x,ss}$ after liquidation of $\hat V_{T-}$ at time $T$. Now, take any $X\in\mathcal A^{x,ss}_{T-}$, whose liquidation value to the first asset is $f=X/\pi_{T-}^{\cdot 1}=X/\pi_{T}^{\cdot 1}=XZ_T/Z^1_T$ by (iii) and Assumption \ref{ass:contPi}. Here $1/\pi^{\cdot 1}$ is the vector whose components are given by $1/\pi^{i1}$ for $i=1,\ldots,d$. Then, in view of (iv), we have
$$\mathbb E[Z^1_T\hat f]=Z_0 x
\geq \mathbb E[Z_T X]=\mathbb E[Z^1_T f].$$
Here, the inequality follows from the supermartingale property of $Z$ and $ZV$ for an admissible $V$ leading to $X$ at time $T-$, since Fatou's lemma and the positivity of $Z, V$ imply 
$$\mathbb{E}[Z_T X]=\mathbb{E}[Z_T V_{T-}] \leq \liminf_{t \uparrow T} \mathbb{E}[Z_T V_t] \leq  \liminf_{t \uparrow T} \mathbb{E}[Z_t V_t] \leq Z_0 x.$$

Concavity of $U$ together with Property (ii) in turn gives
\begin{equation*}	\label{polar}
\mathbb E[U(\hat f)-U(f)]\geq \mathbb E[Z^1_T(\hat f-f)]\geq 0,
\end{equation*}
implying that $\hat f$ is the optimal solution to the transaction cost problem \eqref{trans_primal}.

Next, we prove that $\hat f$ is also optimal in the frictionless market with price process $S^{Z}$. To this end, take a strategy $V$ which is $Z$-admissible in the sense of Definition \ref{SF_shadow} with $V_0=x$, so that the portfolio value at time $t$ is given by $W_t:=V_tS_t^{Z}$. The process $Z^1_t W_t=Z_t V_t$ is a supermartingale by Lemma \ref{supermart}(ii), hence we have $\mathbb E[Z^1_T W]\leq Z_0 x$ for any $W\in\mathcal A_T^{x,ss}(Z)$. Now note that the optimization problem in this frictionless market,
$$\sup_{W \in \mathcal{A}^{x,ss}_T(Z)}\mathbb E[U(W)],$$
is dominated by the static problem
\begin{equation*}
\sup \{\mathbb E[U(W)]: W\in L^0 (\mathbb R_+), \mathbb E[Z^1_T W]\leq Z_0 x\},\\
\end{equation*}
which by monotonicity of $U$ can be written as
\begin{equation*}
\sup \{\mathbb E[U(W)]: W\in L^0 (\mathbb R_+), \mathbb E[Z^1_T W]= Z_0 x\}.\\
\end{equation*}
This problem admits a solution which is -- recalling that $I=(U')^{-1}$ -- given by
$$\hat W:=I(Z^1_T)=\hat f=\hat V_{T-}Z_T/Z_1=\hat V_{T-} S^{Z}_T .$$
Indeed, the definition of the conjugate function gives
\begin{equation}    \label{polar_ineq}
\mathbb E[U(W)]\leq \mathbb E[U^*(Z^1_T)]+Z_0 x
\end{equation}
for all random variables $W$ satisfying $\mathbb E[Z^1_T W]= Z_0 x$. The random variable $\hat W=I(Z^1_T)$ attains the supremum (pointwise) in the definition of $U^*(Z^1_T)$ and moreover, by assumption, $\mathbb E[Z^1_T \hat W]= Z_0 x$. Hence, \eqref{polar_ineq} becomes an equality and it follows that $\hat W=\hat f$ is also an optimal payoff in the frictionless market with price process $S^Z$.
The latter therefore is a shadow price as claimed.
\end{proof}

Using the sufficient conditions from Proposition \ref{suff_cond}, we now establish the existence of a shadow price in our multi-currency market model with transaction costs. We proceed similarly as in \cite{loewenstein.00}, adapting the arguments to our more general setting, and using that we have shown existence of an optimal solution $\hat f$ to the utility maximization problem in Proposition \ref{prop:existence} above. First, notice that the value function $J(x)$ is concave, increasing and finite for all $x$ in the set $\mathbb R^d_+ \setminus \{0\}$. By \cite[Theorem 23.4]{rockafellar.70} it is also superdifferentiable for all $x\in \mathbb R^d_+ \setminus \{0\}$.\footnote{In fact, $J$ can be seen as the restriction to $\RR^d_+ \setminus \{0\}$ of some other concave function defined on $K_0$ that allows for negative initial endowment (but forces the agent to make an instantaneous trade at time $0$ in that case).} We recall that the superdifferential $\partial \varphi (x)$ of any concave function $\varphi$ at some point $x$ is defined as the set of all $y \in \mathbb R^d$ such that
\[  \varphi (z) \leq \varphi (x) + y(z-x), \quad \textrm{for all } z \in \mathbb R^d.\]

\begin{proposition}
Fix $x\in\mathbb{R}^d_+ \setminus \{0\}$, the associated optimal solution $\hat f$, and take $h = (h_1 ,\ldots, h_d)$ in the superdifferential $\partial J(x)$. Then, the following properties hold:
\begin{enumerate}
\item $h_1\geq \mathbb E[U'(\hat f)]$;
\item $h_i\geq \mathbb E\left[U'(\hat f)/\pi_T^{1i}\right]$ for $i=2,\ldots,d$;
\item $h\in K^*_0$;
\item $h x=\mathbb E[U'(\hat f)\hat f]$.
\end{enumerate} \label{supergradient}
In particular, the optimal payoff $\hat f$ is a.s.\ strictly positive.
\end{proposition}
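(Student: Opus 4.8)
The plan is to exploit the superdifferential inequality for $J$ together with perturbations of the initial endowment in the direction of the basis vectors $e_i$, and then sandwich $J$ between the transaction-cost problem and a suitable frictionless problem.

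First I would fix an arbitrary supermartingale-SCPS $Z\in\Z^s_{\sup}$, which exists by Assumption \ref{ass_CPS}. The key observation is that for any $V\in\mathcal A^{x,ss}$, Lemma \ref{supermart}(i) tells us that $ZV$ is a supermartingale, so the same Fatou argument used in the proof of Proposition \ref{suff_cond} gives $\E{Z_T^1 V_T^1}=\E{Z_T^1 f}\le Z_0 x$ for every $f\in\A{x,ss}$; here I use that $V_T$ is concentrated in the first component together with Assumption \ref{ass:contPi} and the fact that $Z_T\in K_T^*$ forces $S_T^{Z,i}=Z_T^i/Z_T^1\le 1/\pi_T^{i1}$, hence $Z_T^1 V_T^1\ge Z_T V_{T-}$ after liquidation. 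This yields a linear budget-type bound, and by perturbing $x$ to $x+\epsilon e_i$ (which only enlarges $\A{\cdot,ss}$ in an explicit way, adding at most $\epsilon Z_T^i$ to the attainable wealth bound and $\epsilon$ units of asset $i$ to trade with) I get quantitative control of $J(x+\epsilon e_i)-J(x)$.

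The core argument then runs as follows. From concavity and $h\in\partial J(x)$ we have $J(x+\epsilon e_i)\le J(x)+\epsilon h_i$; combined with a lower bound on $J(x+\epsilon e_i)$ obtained by taking the optimizer $\hat V$ for $x$, adding the extra $\epsilon e_i$, liquidating it at time $T$ and superreplicating via Lemma \ref{lem:dom}, one obtains after dividing by $\epsilon$ and letting $\epsilon\downarrow 0$ the inequalities (i) and (ii); concretely, $h_1\ge \E{U'(\hat f)}$ because one unit of the numéraire asset produces one extra unit of terminal wealth and $U'$ is the marginal utility at $\hat f$ by concavity of $U$ (the inequality $U(\hat f+\epsilon)-U(\hat f)\ge \epsilon U'(\hat f+\epsilon)$ together with monotone/dominated convergence), and similarly one extra unit of asset $i$ can be sold at time $T$ for $1/\pi_T^{i1}$ units of the numéraire. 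For (iii), I would test the superdifferential inequality $J(z)\le J(x)+h(z-x)$ against $z=x+v$ for $v\in K_0$: since an initial endowment moved by $v\in K_0$ can be instantaneously traded back to $x$ at time $0$ (as $-K_0$ are the portfolios available at price zero), we have $J(x+v)\ge J(x)$ for all $v\in K_0$, whence $h v\ge 0$ for all $v\in K_0$, i.e. $h\in K_0^*$. For (iv), the superdifferential inequality at $z=tx$, $t>0$, combined with the positive homogeneity structure—or more directly testing $z=x\pm\epsilon x$ and using concavity—gives $hx=\lim_{\epsilon\to 0}(J((1+\epsilon)x)-J(x))/\epsilon$; the right-hand side I identify with $\E{U'(\hat f)\hat f}$ by the same marginal-utility computation applied to the scaling $\hat f\mapsto(1+\epsilon)\hat f$, which is attainable from $(1+\epsilon)x$. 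Finally, strict positivity of $\hat f$ follows because $hx=\E{U'(\hat f)\hat f}<\infty$ forces $U'(\hat f)<\infty$ a.s., and by the Inada condition $\lim_{x\downarrow0}U'(x)=\infty$ this rules out $\hat f=0$ on a set of positive measure.

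The main obstacle I anticipate is making the liminf/limsup passages in $\epsilon$ fully rigorous, i.e. justifying the interchange of limit and expectation when passing from the difference quotients of $J$ to expectations involving $U'(\hat f)$. On the lower-bound side one needs a dominated or monotone convergence argument for $(U(\hat f+\epsilon g)-U(\hat f))/\epsilon\to U'(\hat f)g$ where $g$ is the marginal payoff (either $1$ or $1/\pi_T^{i1}$ or $\hat f$ itself); monotonicity in $\epsilon$ of the difference quotient from concavity of $U$ handles this cleanly from below, which is exactly the direction needed for the stated inequalities. The reverse bounds $J(x+\epsilon e_i)\le J(x)+\epsilon h_i$ are free from the superdifferential, so no integrability issue arises there. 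A secondary subtlety is that $J$ is a priori only defined on $\RR^d_+\setminus\{0\}$, so when I perturb I must stay in this set (for $e_i$ this is automatic; for the homogeneity argument in (iv) it is automatic as well), and the footnote's remark about extending $J$ to $K_0$ is what legitimizes the computation in (iii).
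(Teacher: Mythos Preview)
Your approach is essentially the same as the paper's: perturb the initial endowment by $\epsilon e_i$ (or $\epsilon x$), use the superdifferential inequality for the upper bound and an explicit admissible strategy for the lower bound, then pass to the limit via the monotonicity of difference quotients of the concave $U$. The opening paragraph invoking a supermartingale-SCPS and the budget bound $\E{Z_T^1 f}\le Z_0 x$ is unnecessary here---neither the paper nor your own core argument uses it.

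One genuine slip: your derivation of strict positivity of $\hat f$ from (iv) does not work. Finiteness of $\E{U'(\hat f)\hat f}$ does \emph{not} force $U'(\hat f)<\infty$ a.s., because on $\{\hat f=0\}$ the product $U'(\hat f)\hat f$ is of the form $\infty\cdot 0$ and may well vanish (e.g.\ for power utilities with exponent in $(0,1)$ one has $xU'(x)\to 0$ as $x\downarrow 0$). The correct route---which you have already set up---is to use (i): $h_1\ge \E{U'(\hat f)}$ with $h_1<\infty$ gives $U'(\hat f)<\infty$ a.s., and then the Inada condition $U'(0+)=\infty$ rules out $\{\hat f=0\}$. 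This is exactly what the paper does.
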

\begin{proof}
For $\epsilon>0$ we have $J(x+e_1\epsilon)\geq \mathbb E[U(\hat f+\epsilon)]$ because one can just hold the extra endowment in asset $1$. Hence, the definition of the superdifferential gives
$$h_1\geq \frac{J(x+e_1\epsilon)-J(x)}{\epsilon}\geq \mathbb E\left[\frac{U(\hat f+\epsilon)-U(\hat f)}{\epsilon}\right].$$
Since $U$ is concave, the monotone convergence theorem yields $h_1\geq \mathbb E[U'(\hat f)]$.  In view of the Inada condition $\lim_{x \downarrow 0} U'(x)=\infty$, this also shows that  $\hat f$ is a.s.\ strictly positive.

For $i = 2,\ldots, d$, we have $J(x+e_i\epsilon)\geq \mathbb E[U(\hat f+\epsilon/\pi_T^{1i})]$ because one can hold the extra endowment in asset $i$ and then liquidate it to asset $1$ at time $T$. Hence, as before, we find $h_i\geq \mathbb E[U'(\hat f)/\pi_T^{1i}]$.

Now notice that, for any $i,j=1,\ldots,d$, one can exchange $\pi^{ij}_0$ units of asset $i$ for $1$ unit of asset $j$ at time zero. Hence, $J(x+e_i\epsilon)\geq J (x+e_j\epsilon/\pi_0^{ij} )$, and the definition of the superdifferential yields
$$0 \leq J(x+e_i\epsilon)-J\left(x+ e_j \epsilon/\pi_0^{ij}\right) \leq \epsilon\left(h_i-h_j/\pi^{ij}_0\right).$$
Together with $h_i \geq 0$, $i=1,\ldots,d$, we obtain $h\in K^*_0$.

Finally,
$$h x(\lambda-1)\geq J(\lambda x)-J(x)\geq \mathbb E[U(\lambda\hat f)-U(\hat f)]$$
because $\mathcal A_T^{\lambda x,ss}=\lambda\mathcal A_T^{x,ss}$. Hence if $\lambda>1$, then
$$h x\geq \mathbb E\left[\frac{U(\lambda\hat f)-U(\hat f)}{\lambda-1}\right],$$
and the argument of the expectation increases as $\lambda \downarrow 1$ by concavity of $U$. Analogously, for $\lambda<1$, the inequality is reversed and the argument of the expectation decreases as $\lambda \uparrow 1$. Hence, monotone and dominated convergence yield $h x=\mathbb E[U'(\hat f)\hat f]$ as claimed.
\end{proof}

For any admissible portfolio process $V$, now define the conditional value process
\begin{equation}\label{Jt} J(V,t):=\esssup_{f\in\mathcal A_{t,T}^{V,ss}} \mathbb E[U(f)\mid \mathcal F_t],
\end{equation}
where $\mathcal A_{t,T}^{V,ss}$ denotes the terminal values of admissible portfolio processes which agree with $V$ on $[0,t]$. Let $\hat V$ be the portfolio process in $\mathcal A^{x,ss}$ leading to the optimal solution $\hat f$ to \eqref{trans_primal}. We can apply \cite[Th\'eor\`eme 1.17]{elkaroui.79} to get the following martingale property for the optimal value process $J(\hat V ,t)$ over the whole time interval $[0,T]$:

\begin{lemma}[Dynamic Programming Principle]\label{DPP}
The following equality holds a.s.:
\[
J(\hat V,s)=\mathbb E[J(\hat V ,t)\mid\mathcal F_s],\quad 0\leq s\leq t \leq T.
\]
\end{lemma}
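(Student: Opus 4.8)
The plan is to apply the general optional-decomposition / dynamic-programming machinery of El Karoui to the family of conditional optimization problems $\{J(V,t)\}$. First I would verify that the family $\mathcal A_{t,T}^{V,ss}$ of terminal payoffs has the right stability properties so that El Karoui's theorem applies: the essential point is that the collection of admissible strategies is \emph{stable under bifurcation (pasting)}, i.e.\ given two admissible strategies agreeing with $\hat V$ on $[0,s]$ and an $\mathcal F_s$-measurable set $A$, the strategy that follows the first on $A$ and the second on $A^c$ after time $s$ is again admissible and self-financing in the sense of Definition \ref{SF}. This is where the structure of the solvency cones and the no-short-selling constraint $V\succeq 0$ is used; because both the self-financing increments $V_\tau-V_\sigma\in -K_{\sigma,\tau}$ and the constraint $V\succeq0$ are checked pathwise and are preserved under such measurable pasting, the family is a lattice in the relevant sense and the conditional value process $J(V,t)$ is well defined (the essential supremum is attained along an increasing sequence).

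Next I would record the two halves of the martingale identity separately. The \textbf{supermartingale inequality} $J(\hat V,s)\ge \mathbb E[J(\hat V,t)\mid\mathcal F_s]$ is the easy direction: for any $\varepsilon$ one picks, for each "branch" at time $t$, a nearly optimal continuation, pastes them together using stability under bifurcation to obtain a single admissible strategy agreeing with $\hat V$ on $[0,t]$ (hence \emph{a fortiori} on $[0,s]$), and takes conditional expectations, using that $U(f)$ is bounded above in $L^1$ by Assumption \ref{assfinite} together with the RAE condition so that the $\mathrm{ess\,sup}$ can be approximated from below and conditional Fatou/monotone convergence applies. The \textbf{submartingale inequality} is where optimality of $\hat V$ enters: since $\hat V$ is \emph{globally} optimal, any strategy coinciding with $\hat V$ on $[0,t]$ and then behaving optimally on $[t,T]$ can do no better than $\hat V$ itself, which forces $\mathbb E[J(\hat V,t)] = \mathbb E[U(\hat f)] = J(x)$; localizing this argument to an arbitrary $\mathcal F_s$-measurable set $A$ (again by pasting $\hat V$ with itself outside $A$) upgrades it to $\mathbb E[J(\hat V,t)\mid\mathcal F_s]\ge J(\hat V,s)$ a.s. Combining the two inequalities gives the claimed identity. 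In fact, rather than redoing this by hand, I would simply invoke \cite[Th\'eor\`eme 1.17]{elkaroui.79}, whose hypotheses are precisely the stability-under-pasting and upper-integrability conditions just verified, and which yields directly that $t\mapsto J(\hat V,t)$ is a martingale (and $t\mapsto J(V,t)$ a supermartingale for every admissible $V$).

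The main obstacle I anticipate is the verification that El Karoui's abstract framework genuinely applies in the present \emph{continuous-time, possibly path-discontinuous} transaction-cost setting: one must make sure that the pasted process is again a bona fide self-financing portfolio process of finite variation (the concatenation of two finite-variation paths agreeing up to time $s$ causes no difficulty, but one must check the cone condition $V_\tau - V_\sigma \in -K_{\sigma,\tau}$ for \emph{all} pairs $\sigma\le\tau$, including those straddling $s$, which follows from $K_{\sigma,s}+K_{s,\tau}\subseteq K_{\sigma,\tau}$ by the very definition of $K_{\cdot,\cdot}$ as a generated cone), and that the conditional value process admits a \cadlag modification so that the statement makes sense for the optional time arguments used later. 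A secondary technical point is integrability: one needs $\mathbb E[U(f)^-]<\infty$ uniformly enough to define $J(V,t)$ as a genuine conditional expectation and to justify the limiting arguments — this is supplied by Assumption \ref{assfinite} for the upper bound and by the fact that $U$ is bounded below near any fixed lower bound on wealth (here $f>0$, with strict positivity of $\hat f$ already established in Proposition \ref{supergradient}), though some care is needed because admissible $f$ may approach $0$.
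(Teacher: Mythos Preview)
Your proposal is correct and matches the paper's approach exactly: the paper simply cites \cite[Th\'eor\`eme 1.17]{elkaroui.79} without further argument, and you do the same while additionally sketching the verification of its hypotheses (stability under pasting, integrability). One minor point: your worry about a \cadlag modification of $J(\hat V,\cdot)$ is unnecessary here, since the lemma is stated only for deterministic times and the paper subsequently works with the non-\cadlag process $\tilde Z$ before regularizing.
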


For $i=1,\ldots,d$, now define a process $\tilde Z$ as follows:
$$\tilde Z^i_t:=\lim_{\epsilon\downarrow 0}\frac{J(\hat V+e_i \epsilon,t)-J(\hat V,t)}{\epsilon}, \quad t\in [0,T), \quad \tilde{Z}^i_T :=\frac{U'(\hat V^1_T)}{\pi^{i1}_T}.$$

\begin{proposition}
$\tilde Z$ is a (not necessarily c\`adl\`ag) supermartingale satisfying $\tilde Z_t\in K^*_t$ a.s.\ for all $t\in [0,T]$.
\end{proposition}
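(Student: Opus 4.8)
The plan is to verify the two properties—the supermartingale inequality and the membership $\tilde Z_t\in K^*_t$—essentially by differentiating the Dynamic Programming Principle of Lemma \ref{DPP} in the initial endowment. The point is that $J(\hat V,t)$, as a function of the "current" endowment at time $t$, is a concave random field obtained from the value function by the same construction that produced $J(x)$; so Proposition \ref{supergradient} applies \emph{conditionally}, and $\tilde Z_t$ is precisely the component of a (measurable selection of a) conditional supergradient of this field at the point $\hat V_t$.

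\medskip

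\emph{First I would make the difference quotients well-defined and monotone.} For fixed $t\in[0,T)$ and $i$, concavity of $\epsilon\mapsto J(\hat V+e_i\epsilon,t)$ (which follows because $\mathcal A^{V,ss}_{t,T}$ scales and adds the way $\mathcal A^{x,ss}_T$ does, together with concavity of $U$) ensures the difference quotient in the definition of $\tilde Z^i_t$ is nonincreasing in $\epsilon$, hence the limit exists in $(-\infty,+\infty]$; and it is in fact finite because, conditionally, the same bound as in Proposition \ref{supergradient}(i)–(ii) holds (the extra unit of asset $i$ can be carried and liquidated, giving $\tilde Z^i_t\ge \mathbb E[U'(\hat V^1_T)/\pi^{1i}_T\mid\mathcal F_t]$ from above is not how it goes—rather the difference quotient is bounded above by $U'$ evaluated at the smaller argument, which is integrable). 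I would also record the conditional analogue of Proposition \ref{supergradient}(iii): for any $i,j$ one can exchange $\pi^{ij}_t$ units of $i$ for one unit of $j$ at time $t$, so $J(\hat V+e_i\epsilon,t)\ge J(\hat V+e_j\epsilon/\pi^{ij}_t,t)$, and dividing by $\epsilon$ and letting $\epsilon\downarrow 0$ gives $\tilde Z^i_t\ge \tilde Z^j_t/\pi^{ij}_t$ a.s.; together with $\tilde Z^i_t\ge 0$ this is exactly $\tilde Z_t\in K^*_t$. (The spanning vectors of $K_t$ are the $e_i$ and the $\pi^{ij}_te_i-e_j$, so these inequalities say $v\cdot\tilde Z_t\ge 0$ for all generators $v$ of $K_t$.) For $t=T$, $\tilde Z_T\in K^*_T$ is the assertion that the vector with components $U'(\hat V^1_T)/\pi^{i1}_T$ lies in $K^*_T$, which again reduces to checking it against the generators $e_i$ and $\pi^{ij}_Te_i-e_j$, using $\pi^{ij}_T\le\pi^{i1}_T\pi^{1j}_T=\pi^{i1}_T/\pi^{j1}_T$ (condition (iii) on the bid-ask matrix and $\pi^{1j}_T\pi^{j1}_T\ge\pi^{11}_T=1$).

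\medskip

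\emph{Next, the supermartingale property.} Fix $0\le s\le t<T$ and $i$. For $\epsilon>0$, a strategy that agrees with $\hat V$ on $[0,s]$ and then carries $\hat V+e_i\epsilon$ from time $s$ is in particular admissible from time $t$ onward with the same extra holdings, so by Lemma \ref{DPP} applied to the perturbed endowment (or directly by the tower property and the definition of the essential supremum),
\[
J(\hat V+e_i\epsilon,s)\ge \mathbb E\big[J(\hat V+e_i\epsilon,t)\,\big|\,\mathcal F_s\big].
\]
Subtracting $J(\hat V,s)=\mathbb E[J(\hat V,t)\mid\mathcal F_s]$ (this is exactly Lemma \ref{DPP}), dividing by $\epsilon$, and passing to the limit $\epsilon\downarrow 0$ — using monotone convergence on the right (the difference quotients decrease to $\tilde Z^i_t$ and are bounded above by an integrable random variable, so a conditional monotone/dominated convergence argument applies) — yields $\tilde Z^i_s\ge\mathbb E[\tilde Z^i_t\mid\mathcal F_s]$. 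The boundary time is handled similarly: for $t=T$ one uses that liquidating the extra $e_i\epsilon$ at $T$ produces terminal wealth $\hat V^1_T+\epsilon/\pi^{i1}_T$, so the difference quotient is $\le (U(\hat V^1_T+\epsilon/\pi^{i1}_T)-U(\hat V^1_T))/\epsilon$, which decreases to $U'(\hat V^1_T)/\pi^{i1}_T=\tilde Z^i_T$ as $\epsilon\downarrow 0$; and from below $J(\hat V,s)\ge\mathbb E[U(\hat V^1_T+\epsilon/\pi^{i1}_T)\mid\mathcal F_s]$ for the same reason, so $\tilde Z^i_s\ge\mathbb E[\tilde Z^i_T\mid\mathcal F_s]$ after the same limiting procedure. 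Finiteness/integrability of $\tilde Z_t$ for each $t$ then follows from $\tilde Z_0x=\mathbb E[U'(\hat f)\hat f]<\infty$ (the conditional version of Proposition \ref{supergradient}(iv)) combined with the supermartingale inequality.

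\medskip

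\emph{The main obstacle} I anticipate is the interchange of limit and conditional expectation, i.e.\ justifying the passage $\epsilon\downarrow 0$ uniformly enough to conclude $\mathbb E[\,\cdot\mid\mathcal F_s]$-wise. Monotonicity of the difference quotients (from concavity) is the key enabler — it turns this into a conditional monotone convergence statement — but one must be careful that the quotients are monotone in the right direction and dominated by an integrable envelope near $\epsilon=0$; the RAE assumption and the Inada conditions, exactly as in the proof of Proposition \ref{supergradient} and Proposition \ref{prop:existence}, are what guarantee this. A secondary subtlety is the measurability/well-definedness of $\tilde Z^i_t$ as a genuine random variable (a measurable version of the pointwise limit of difference quotients): since each $J(\hat V+e_i\epsilon,t)$ is $\mathcal F_t$-measurable and the limit can be taken along a countable sequence $\epsilon_n\downarrow 0$ by monotonicity, this is routine. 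Everything else — the $K^*_t$ membership and the boundary behaviour at $T$ — is a direct translation of Proposition \ref{supergradient} to the conditional setting.
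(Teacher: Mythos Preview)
Your approach is essentially the paper's: concavity gives monotone difference quotients, the supermartingale inequality comes from the inclusion $\mathcal A_{t,T}^{\hat V+e_i\epsilon}\subseteq\mathcal A_{s,T}^{\hat V+e_i\epsilon}$ combined with Lemma \ref{DPP} and conditional monotone convergence, and $K^*_t$ follows from the exchange argument. Two places where the paper is more careful than your sketch: (i) for $\tilde Z_t\in K^*_t$ the paper discretizes the random $\pi^{ij}_t$ via partitions $\{k^n_l<\pi^{ij}_t\le k^n_{l+1}\}$ so as to compare only with \emph{deterministic} perturbations $\epsilon/k^n_{l+1}$---your direct route with the $\mathcal F_t$-measurable perturbation $\epsilon/\pi^{ij}_t$ is fine but needs a sentence on why the defining limit for $\tilde Z^j_t$ is unchanged when the deterministic $\epsilon$ is replaced by a random one; (ii) your boundary-time paragraph has the inequalities backwards: liquidating the extra $\epsilon e_i$ yields the \emph{lower} bound $J(\hat V+e_i\epsilon,s)-J(\hat V,s)\ge\mathbb E\big[U(\hat V^1_T+\epsilon/\pi^{i1}_T)-U(\hat V^1_T)\,\big|\,\mathcal F_s\big]$, and it is this inequality (not the reversed one you wrote, and with $J(\hat V+e_i\epsilon,s)$ rather than $J(\hat V,s)$ on the left) that gives $\tilde Z^i_s\ge\mathbb E[\tilde Z^i_T\mid\mathcal F_s]$ after monotone convergence---exactly as the paper does.
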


\begin{proof}
We adapt the argument of \cite[Lemma 4]{loewenstein.00}. Consider $\epsilon_1,\epsilon_2>0$ with $\epsilon_2<\epsilon_1$. Using the concavity of the utility function $U$ and the definition of the essential supremum yields
\begin{equation*}
\begin{split}
J(\hat V+e_i \epsilon_2,t)&=J\left(\frac{\epsilon_2}{\epsilon_1}(\hat V+e_i\epsilon_1)+\left(1-\frac{\epsilon_2}{\epsilon_1}\right)\hat V,t\right)\\
&\geq \frac{\epsilon_2}{\epsilon_1}J\left(\hat V +e_i\epsilon_1,t\right)+\left(1-\frac{\epsilon_2}{\epsilon_1}\right)J\left(\hat V ,t\right).
\end{split}
\end{equation*}
As a consequence, $\tilde Z^i_t$ is well-defined as the limit of an increasing sequence. For the remainder of the proof, we drop the superscript ``$ss$'' to ease notation. Since the family $\{\mathbb E[U(f)\mid \mathcal F_t]: f\in\mathcal A_{t,T}^{\hat V +\epsilon e_i}\}$ is directed upwards, \cite[Theorem A.3]{karatzas.shreve.98} allows to write the essential supremum as a limit which is monotone increasing in $n$:
$$J(\hat V+e_i\epsilon,t)=\esssup_{f\in\mathcal A_{t,T}^{\hat V+\epsilon e_i}} \mathbb E[U(f)\mid \mathcal F_t]=\lim_{n\to\infty}\mathbb E[U(f^n)\mid \mathcal F_t],$$
where $(f^n)_{n\geq 0}$ is a sequence of elements of $\mathcal A_{t,T}^{\hat V+e_i\epsilon}$.
As $\mathcal A_{t,T}^{\hat V+e_i\epsilon}\subseteq \mathcal A_{s,T}^{\hat V+e_i\epsilon}$ for $0\leq s\leq t<T$,
\begin{equation*}
\begin{split}
J(\hat V+ e_i\epsilon,s)&=\esssup_{f\in\mathcal A_{s,T}^{\hat V+e_i\epsilon}} \mathbb E[U(f)\mid \mathcal F_s]\\
&\geq \esssup_{f\in\mathcal A_{t,T}^{\hat V+e_i\epsilon}} \mathbb E[U(f)\mid \mathcal F_s]\\
&\geq \mathbb E[U(f^n)\mid \mathcal F_s]=\mathbb E[\mathbb E[U(f^n)\mid\mathcal F_t]\mid \mathcal F_s]
\end{split}
\end{equation*}
for all $n\geq 0$. But then monotone convergence gives
\begin{equation*}
\begin{split}
J(\hat V+e_i\epsilon,s)\geq \lim_{n\to \infty}\mathbb E[\mathbb E[U(f^n)\mid\mathcal F_t]\mid \mathcal F_s]&= \mathbb E[\lim_{n\to \infty}\mathbb E[U(f^n)\mid\mathcal F_t]\mid \mathcal F_s]\\
&=\mathbb E[J(\hat V+e_i\epsilon,t)\mid \mathcal F_s].
\end{split}
\end{equation*}
Now, Lemma \ref{DPP} implies
\begin{equation*}
\begin{split}
\frac{J(\hat V+e_i\epsilon,s)-J(\hat V,s)}{\epsilon}\geq \mathbb E\left[\frac{J(\hat V+e_i\epsilon,t)-J(\hat V,t)}{\epsilon}\Big| \mathcal F_s\right]
\end{split}
\end{equation*}
and the supermartingale property of $\tilde Z$ on $[0,T)$ follows by monotone convergence for $\epsilon \downarrow 0$. In order to verify it for the terminal time $T$ as well notice that, for $0 \leq t < T$,
$$J(\hat V+e_i \epsilon,t)-J(\hat V,t) \geq \mathbb{E}\left[U(\hat V^1_T+\epsilon/\pi^{i1}_T)-U(\hat V^1_T)|\mathcal{F}_t\right]$$
because it is admissible to hold the $\epsilon$ extra units of asset $i$ before liquidating them into $\epsilon/\pi^{i1}_{T-}=\epsilon/\pi^{i1}_T$ units of asset $1$, and $J(\hat V,t)=\mathbb{E}[U(\hat V^1_T)|\mathcal{F}_t]$ by Lemma \ref{DPP}. Then, monotone convergence yields
$$\tilde{Z}^i_t \geq \mathbb{E}[U'(\hat V^1_T)/\pi^{i1}_T |\mathcal{F}_t]=\mathbb{E}[\tilde Z^i_T|\mathcal{F}_t], \quad i=1,\ldots,d,$$
such that $\tilde{Z}$ is indeed a supermartingale on $[0,T]$. In particular, it is finite-valued.

It remains to show that $\tilde Z_t\in K^*_t$ for all $t\in [0,T]$. To this end first fix $t\in [0,T)$ and let $(k^n_l)_{l\geq 0}$ be a partition of $[0,\infty)$ with mesh size decreasing to zero as $n$ increases. Note that, for all $\epsilon >0$, on the set $\{k^n_l<\pi_t ^{ij}\leq k^n_{l+1}\}$ we have
$$J(\hat V+ e_i \epsilon,t) - J(\hat V,t)\geq J\left(\hat V+ e_j \epsilon/k^n_{l+1}  ,t\right) - J(\hat V,t)$$
because it is admissible to exchange the $\epsilon$ extra units of asset $i$ for at least $\epsilon/k^n_{l+1}$ units of asset $j$ immediately after time $t$. Again using monotone convergence, this in turn implies
$$\tilde Z^i_t k^n_{l+1}\one_{\{k^n_l<\pi_t ^{ij}\leq k^n_{l+1}\}}\geq \tilde Z^j_t\one_{\{k^n_l<\pi_t ^{ij}\leq k^n_{l+1}\}},$$
and thus
$$\tilde Z^i_t \sum_{l\geq 0} k^n_{l+1}\one_{\{k^n_l<\pi_t ^{ij}\leq k^n_{l+1}\}}\geq \tilde Z^j_t.$$
Then, letting $n\to\infty$ we obtain $\tilde Z^i_t \pi_t ^{ij}\geq \tilde Z^j_t$ for all $i,j=1,\dots,d$. Hence, $\tilde Z_t\in K^*_t$ for $t\in [0,T)$. For the terminal time $T$, this follows directly from the definition and property (iii) in the definition of a bid-ask matrix.
\end{proof}

The process $\tilde Z$ constructed above is a supermartingale but not necessarily c\`adl\`ag. Therefore, we pass to the regularized c\`adl\`ag process $\hat Z$ defined by $\hat Z_T=\tilde Z_T$ and
$$\hat Z^i_t:=\lim_{s\downarrow t, s\in \mathbb Q} \tilde Z^i_s$$
for all $i=1,\ldots ,d$ and $t\in [0,T)$. Note that the limit exists by \cite[Proposition 1.3.14(i)]{karatzas.shreve.88}.

We can now establish our main result, the existence of shadow prices under short selling constraints subject only to the existence of a supermartingale strictly consistent price system (Assumption~\ref{ass_CPS}) and finiteness of the maximal expected utility (Assumption~\ref{assfinite}).

\begin{theorem} \label{existence_shadow}
The process $\hat Z$ belongs to $\mathcal Z_{\sup}$. Moreover, it satisfies the sufficient conditions of Proposition \ref{suff_cond}. Consequently, $S^{\hat Z} = \hat Z/\hat Z^1$ is a shadow price process.
\end{theorem}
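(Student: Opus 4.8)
The plan is to establish the three assertions of Theorem~\ref{existence_shadow} in order: that $\hat Z\in\mathcal Z_{\sup}$, that $\hat Z$ together with the optimal payoff $\hat f=\hat V^1_T$ from Proposition~\ref{prop:existence} satisfies conditions (i)--(iv) of Proposition~\ref{suff_cond}, and finally — which is then immediate — that $S^{\hat Z}$ is a shadow price. The work lies in the first two steps.

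First I would check $\hat Z\in\mathcal Z_{\sup}$. Since the preceding proposition shows $\tilde Z$ is a supermartingale on $[0,T]$ with respect to a filtration satisfying the usual conditions, standard supermartingale regularization (the relevant right-limits exist by \cite[Proposition~1.3.14]{karatzas.shreve.88}) makes $\hat Z$ a \cadlag supermartingale on $[0,T]$ with $\hat Z_t\preceq\tilde Z_t$ a.s.\ for every $t$, and adaptedness follows from right-continuity of the filtration. For the polarity $\hat Z_t\in K^*_t$, recall $K(\Pi)^*=\{w\in\RR^d: w_i\ge 0\text{ and }\pi^{ij}w_i\ge w_j\ \forall i,j\}$: fixing a countable dense set $D\subseteq[0,T)$, the inequalities $\tilde Z^i_s\ge 0$ and $\pi^{ij}_s\tilde Z^i_s\ge\tilde Z^j_s$ hold simultaneously for all $s\in D$ on one a.s.\ event, and letting $s\downarrow t$ along $D$ while using that $\Pi$ is \cadlag (so $\pi^{ij}_s\to\pi^{ij}_t$) yields $\hat Z_t\in K^*_t$ for all $t\in[0,T)$; for $t=T$ one uses $\hat Z_T=\tilde Z_T$ together with property (iii) of a bid-ask matrix directly. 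Finally, $\hat Z$ is $\RR^d_+\setminus\{0\}$-valued because the supermartingale property gives $\hat Z_t\succeq\mathbb E[\hat Z_T\mid\mathcal F_t]$ and $\hat Z^i_T=\tilde Z^i_T=U'(\hat f)/\pi^{i1}_T$ is a.s.\ strictly positive, since $\hat f>0$ a.s.\ by Proposition~\ref{supergradient}, $U'>0$, and $\pi^{i1}_T<\infty$.

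Next I would verify conditions (i)--(iv) of Proposition~\ref{suff_cond} for $\hat f=\hat V^1_T$. Condition (i) is Proposition~\ref{prop:existence}; conditions (ii) and (iii) are immediate from $\hat Z^i_T=U'(\hat f)/\pi^{i1}_T$ and $\pi^{11}_T=1$. The substance is (iv), $\mathbb E[\hat Z^1_T\hat f]=\hat Z_0x$, which I would get by sandwiching. For ``$\le$'': since $\hat V_T=\hat f e_1$ and $\hat V_{T-}-\hat V_T\in K_T$ with $\hat Z_T\in K^*_T$, we have $\hat Z^1_T\hat f=\hat Z_T\hat V_T\le\hat Z_T\hat V_{T-}$; by Lemma~\ref{supermart}(i) the process $\hat Z\hat V$ is a supermartingale, so the Fatou argument already used inside the proof of Proposition~\ref{suff_cond}, together with $\hat Z_0\hat V_0\le\hat Z_0x$ (which holds because $x-\hat V_0\in K_0$ and $\hat Z_0\in K^*_0$), gives $\mathbb E[\hat Z_T\hat V_{T-}]\le\hat Z_0x$. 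For ``$\ge$'': by construction $\tilde Z^i_0=\lim_{\epsilon\downarrow 0}\epsilon^{-1}\bigl(J(\hat V+e_i\epsilon,0)-J(x)\bigr)$, and since any strategy holding $\hat V_0+e_i\epsilon$ at time $0$ is realizable from endowment $x+e_i\epsilon$ we have $J(\hat V+e_i\epsilon,0)\le J(x+e_i\epsilon)$; hence for any $h\in\partial J(x)$ (nonempty by superdifferentiability of $J$) concavity yields $\tilde Z^i_0\le\lim_{\epsilon\downarrow 0}\epsilon^{-1}\bigl(J(x+e_i\epsilon)-J(x)\bigr)\le h_i$, so $\hat Z_0\preceq\tilde Z_0\preceq h$ and therefore $\hat Z_0x\le hx=\mathbb E[U'(\hat f)\hat f]=\mathbb E[\hat Z^1_T\hat f]$ by Proposition~\ref{supergradient}(iv) and condition (ii). Combining the two inequalities gives (iv), and Proposition~\ref{suff_cond} then identifies $\hat f$ as the common optimizer, so $S^{\hat Z}$ is a shadow price.

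I expect the main obstacle to be condition (iv), specifically pinning down $\hat Z_0$: one must recognize $\tilde Z_0$ as the vector of one-sided partial derivatives of the value function at $x$, bound it above by an \emph{arbitrary} supergradient $h\in\partial J(x)$ (so that Proposition~\ref{supergradient}(iv) can be invoked without producing a dual optimizer directly), and keep track of the fact that the \cadlag regularization can only decrease the time-$0$ value, so that the two estimates genuinely close up. A secondary technical point is that $K^*$-membership must survive the regularization; this is routine once one works on a fixed countable dense set of times and uses right-continuity of $\Pi$.
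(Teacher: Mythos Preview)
Your proposal is correct and follows essentially the same route as the paper's proof: both obtain $\hat Z\in\mathcal Z_{\sup}$ from standard supermartingale regularization together with right-continuity of $\Pi$, read off conditions (ii) and (iii) from the definition of $\tilde Z_T=\hat Z_T$, and establish (iv) by sandwiching $\mathbb E[\hat Z^1_T\hat f]$ between $\hat Z_0x$ (via Lemma~\ref{supermart}) and $hx$ (via $h_i\ge\tilde Z^i_0\ge\hat Z^i_0$ and Proposition~\ref{supergradient}(iv)). You are simply more explicit than the paper on a few technical points---notably the verification that $\hat Z$ is $\RR^d_+\setminus\{0\}$-valued and the passage $J(\hat V+e_i\epsilon,0)\le J(x+e_i\epsilon)$---but there is no substantive difference in strategy.
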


\begin{proof}
By \cite[Proposition 1.3.14(iii)]{karatzas.shreve.88}, the process $\hat Z$ is a c\`adl\`ag supermartingale. Moreover, since the bid-ask matrix is right continuous, we have $\hat Z\in\mathcal Z_{\sup}$. By definition, we have $\tilde Z^1_T=U'(\hat f)$ and $\tilde Z_T^i / \tilde Z_T^1= 1/\pi_T ^{i1}$ for $i=1,\ldots ,d$. Since  $\tilde Z$ and $\hat Z$ are equal in $T$, it therefore remains to verify condition (iv) in Proposition \ref{suff_cond}. By Proposition \ref{supergradient},
\begin{equation} \label{shadow_0}
h x=\mathbb E[U'(\hat f)\hat f]=\mathbb E[\hat Z^1_T\hat V^1_T]=\mathbb E[\hat{Z}_T \hat V_T],
\end{equation}
for the portfolio process $\hat V$ attaining $\hat f$. The definition of the superdifferential then gives
$$h_i\geq \frac{J(x+e_i\epsilon)-J(x)}{\epsilon}$$
for any $\epsilon>0$. Hence, $h_i\geq \tilde Z^i_0\geq \tilde Z^i_{0+}=\hat Z^i_{0}$ for all $i=1,\ldots ,d$ by \cite[Proposition 1.3.14(ii)]{karatzas.shreve.88}. Combined with \eqref{shadow_0} and because $x$ has positive components, we obtain
$$\mathbb E[\hat Z_T \hat V_T]=hx\geq \hat Z_0 x$$
Conversely, since $\hat Z\in\mathcal Z_{\sup}$ we can apply the supermartingale property established in Lemma \ref{supermart} which gives $\mathbb E[\hat Z_T\hat V_T]\leq \hat Z_0 x$ and hence $\mathbb E[\hat Z_T \hat V_T]=\hat Z_0 x$. Thus, the sufficient conditions in Proposition \ref{suff_cond} are satisfied and the proof is completed.
\end{proof}

As a result, we can now formulate a precise version of Theorem \ref{mainresult} from the introduction:

\begin{corollary}\label{cor:shadow}
Under short selling constraints and subject to Assumptions \ref{ass_CPS} and \ref{assfinite}, a shadow price in the sense of Definition \ref{def:shadow} exists.
\end{corollary}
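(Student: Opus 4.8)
The plan is simply to assemble the machinery developed in this section. Under Assumptions~\ref{ass_CPS} and \ref{assfinite}, Proposition~\ref{prop:existence} furnishes a (unique) optimizer $\hat f\in\mathcal A_T^{x,ss}$ of the transaction-cost problem \eqref{trans_primal}, attained by some admissible portfolio process $\hat V$. Proposition~\ref{supergradient} then records that $\hat f>0$ a.s.\ and provides the superdifferential identities for $J$ at $x$; the dynamic programming principle of Lemma~\ref{DPP} promotes the pointwise directional derivatives of the conditional value process to the supermartingale $\tilde Z$ with $\tilde Z_t\in K_t^*$, and the regularization step produces the \cadlag process $\hat Z$. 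Theorem~\ref{existence_shadow} checks that $\hat Z\in\mathcal Z_{\sup}$ and that it satisfies conditions (i)--(iv) of Proposition~\ref{suff_cond}, whence $S^{\hat Z}=\hat Z/\hat Z^1$ is a shadow price. Thus the proof of the corollary is essentially one line: \emph{by Theorem~\ref{existence_shadow}, $S^{\hat Z}$ is a shadow price in the sense of Definition~\ref{def:shadow}, and every hypothesis used along the way is subsumed by Assumptions~\ref{ass_CPS} and \ref{assfinite} together with the standing short-selling constraint $V\succeq0$ built into $\mathcal A^{x,ss}$.}

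Beyond invoking Theorem~\ref{existence_shadow}, the only things I would make explicit in the write-up are the bookkeeping points on which the chain silently rests. First, Assumption~\ref{ass_CPS} is stated as $\mathcal Z_{\sup}^s\neq\emptyset$, which in particular gives $\mathcal Z_{\sup}\neq\emptyset$, so that Lemmas~\ref{lem:dom} and \ref{supermart} apply to $\hat Z$. Second, the no-short-selling constraint is genuinely indispensable: it is exactly what makes $ZV$ a \emph{true} supermartingale in Lemma~\ref{supermart}(i), and hence what makes the budget inequality $\mathbb E[Z_T^1 f]\le Z_0 x$ available for \emph{all} competing payoffs $f$ rather than merely the optimizer — this is the step that breaks in the counterexample of Section~\ref{counterexample}. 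Third, Assumption~\ref{ass:contPi} ($\Pi_{T-}=\Pi_T$) is what lets the liquidation value at $T$ be read off $\pi_T^{\cdot1}$, which is what ties condition (iii) of Proposition~\ref{suff_cond} to the terminal specification $\tilde Z_T^i=U'(\hat V^1_T)/\pi_T^{i1}$.

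At the level of the corollary there is no real obstacle — the substantive work is Theorem~\ref{existence_shadow}, and within it the verification of condition (iv), $\mathbb E[\hat Z_T^1\hat f]=\hat Z_0 x$. That equality is obtained by pinching: the superdifferential lower bound $hx=\mathbb E[U'(\hat f)\hat f]\ge\hat Z_0 x$, which uses $h_i\ge\tilde Z_0^i\ge\tilde Z_{0+}^i=\hat Z_0^i$ and positivity of the components of $x$, against the supermartingale upper bound $\mathbb E[\hat Z_T\hat V_T]\le\hat Z_0 x$ from Lemma~\ref{supermart}. If one were to redo the construction from scratch rather than cite Theorem~\ref{existence_shadow}, the genuinely delicate point would be the passage from the possibly non-\cadlag supermartingale $\tilde Z$ to its right-continuous regularization $\hat Z$ without destroying the consistent-price-system property — in particular, ensuring $\hat Z_0=\tilde Z_{0+}\in K_0^*$ and that this right limit is still dominated by the supergradient $h$, which is precisely what the monotonicity of $\epsilon\mapsto(J(\hat V+e_i\epsilon,\cdot)-J(\hat V,\cdot))/\epsilon$ and \cite[Proposition~1.3.14]{karatzas.shreve.88} are invoked for.

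\begin{proof}
By Theorem~\ref{existence_shadow}, the process $\hat Z$ constructed above belongs to $\mathcal Z_{\sup}$ and satisfies the sufficient conditions (i)--(iv) of Proposition~\ref{suff_cond}. Consequently $S^{\hat Z}=\hat Z/\hat Z^1$ is a shadow price in the sense of Definition~\ref{def:shadow}. All ingredients entering this construction — Propositions~\ref{prop:existence}, \ref{supergradient}, Lemmas~\ref{supermart}, \ref{DPP} and the subsequent two propositions — require only the existence of a supermartingale strictly consistent price system (Assumption~\ref{ass_CPS}), finiteness of the maximal expected utility (Assumption~\ref{assfinite}), and the no short selling constraint $V\succeq0$ imposed throughout. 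This proves the claim.
\end{proof}
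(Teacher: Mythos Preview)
Your proposal is correct and matches the paper's approach exactly: the corollary is stated immediately after Theorem~\ref{existence_shadow} without a separate proof, since it is just the one-line observation that Theorem~\ref{existence_shadow} (whose hypotheses are precisely Assumptions~\ref{ass_CPS} and \ref{assfinite} plus the standing short-selling constraint) already furnishes the shadow price $S^{\hat Z}$. Your additional bookkeeping remarks are accurate and could only help a reader, but they go beyond what the paper itself provides.
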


\section{Conclusion}\label{conclusion}
We have shown that shadow prices always exist in the presence of short selling constraints, even in general multi-currency markets with random, time-varying, and possibly discontinuous bid-ask spreads. On the other hand, we have presented a counterexample showing that existence generally does not hold beyond finite probability spaces if short selling is permitted. Yet, in simple concrete models the presence of short selling does not preclude the existence of shadow prices, compare \cite{gerhold.al.11}. It is therefore an intriguing question for future research to identify additional assumptions on the market structure that warrant their existence. We also conjecture that shadow prices should always exist for utilities defined on the whole real line, where there is no solvency constraint that can become binding as in our counterexample. Settling this issue, however, will require to resolve the ubiquitous issue of admissibility, potentially along the lines of \cite{biagini.cerny.11}, and is therefore left for future research.

\bibliographystyle{acm}
\bibliography{shadow}

\end{document}